\title{Substochastic Monte Carlo Algorithms}
\author[1]{Michael Jarret \thanks{mjarret@pitp.ca}}
\affil[1]{\footnotesize\PITP}
\author[2,3,4]{Brad Lackey \thanks{bclackey@umd.edu}}
\affil[2]{\footnotesize Joint Center for Quantum Information and Computer Science, University of Maryland, College Park, USA}
\affil[3]{\footnotesize Departments of Computer Science and Mathematics, University of Maryland, College Park, USA}
\affil[4]{\footnotesize Mathematics Research Group, National Security Agency, Ft. G. G. Meade, MD, USA}
\begin{document}
    \maketitle

\newtheorem{theorem}{Theorem}
\newtheorem{lemma}[theorem]{Lemma}
\newtheorem{proposition}[theorem]{Proposition}
\newtheorem{corollary}[theorem]{Corollary}
\newtheorem{definition}[theorem]{Definition}

\renewcommand{\discussion}[1]{}
\newcommand{\half}{\frac{1}{2}}

\begin{abstract}
In this paper we introduce and formalize Substochastic Monte Carlo (SSMC) algorithms. These algorithms, originally intended to be a better classical foil to quantum annealing than simulated annealing, prove to be worthy optimization algorithms in their own right. In SSMC, a population of walkers is initialized according to a known distribution on an arbitrary search space and varied into the solution of some optimization problem of interest. The first argument of this paper shows how an existing classical algorithm, ``Go-With-The-Winners'' (GWW), is a limiting case of SSMC when restricted to binary search and particular driving dynamics. 

Although limiting to GWW, SSMC is more general. We show that (1) GWW can be efficiently simulated within the SSMC framework, (2) SSMC can be exponentially faster than GWW, (3) by naturally incorporating structural information, SSMC can exponentially outperform the quantum algorithm that first inspired it, and (4) SSMC exhibits desirable search features in general spaces. Our approach combines ideas from genetic algorithms (GWW), theoretical probability (Fleming-Viot processes), and quantum computing. Not only do we demonstrate that SSMC is often more efficient than competing algorithms, but we also hope that our results connecting these disciplines will impact each independently. An implemented version of SSMC has previously enjoyed some success as a competitive optimization algorithm for Max-$k$-SAT \cite{Jarret2016AdiabaticMethods,LackeyGithub}.
\end{abstract}

\maketitle

\newpage
\section{Introduction}
In 1997, Aldous and Vazirani introduced ``Go-with-the-Winners'' (GWW) algorithms capable of exponentially outperforming depth-first search at finding the deepest nodes of layered graphs \cite{AV94}. In the original exposition, these algorithms served both as a mathematically accessible model of particular aspects of genetic algorithms and as a foil for simulated annealing. Recently, in \cite{Jarret2016AdiabaticMethods} we introduced a numerical algorithm which we called Substochastic Monte Carlo (SSMC) as a continuous time optimization algorithm, and claimed that numerical simulations of SSMC share many features with GWW. However, SSMC is more general than GWW; where GWW is restricted to layered graphs obeying certain constraints, SSMC can be used in general search spaces.

Our primary focus in \cite{Jarret2016AdiabaticMethods} was creating a classical foil to quantum annealing (QA). \footnote{SSMC may be interpreted as a form of classically simulating QA \cite{Albash2016,andriyash2017can,vinci2017non,nishimori2016exponential,crosson2016simulated}. In this work, however, we caution against that interpretation and treat it as a competing classical algorithm.} In our prior work, we focused on SSMC as a population algorithm that approximates a classical adiabatic evolution in order to solve optimization problems. In this paper, we rigorously analyze SSMC as an optimization algorithm in its own right. In particular, we show that under appropriate ``driving dynamics,'' SSMC converges to GWW. Nevertheless, we show that under our usual driving dynamics, SSMC can be exponentially faster than GWW. Furthermore, as something of a corollary to this result, we show that SSMC can actually be exponentially faster than the equivalent QA algorithm. That is, (1) SSMC is more general than GWW, (2) GWW can be efficiently simulated within the SSMC framework, (3) SSMC can be exponentially faster than GWW, (4) by incorporating structural information, SSMC can exponentially outperform the quantum algorithm that inspired it \textit{with the same annealing schedule}, and (5) in more general search settings, SSMC exhibits desirable search characteristics, such as gradient descent against biases.

Our approach combines ideas from genetic algorithms (GWW), theoretical probability (Fleming-Viot processes), and quantum computing (QA). Not only do we demonstrate that SSMC is often more efficient than competing algorithms, but we also hope that our results connecting these disciplines will impact each independently. An implemented version of SSMC has previously enjoyed some success as a competitive optimization algorithm for Max-$k$-SAT \cite{Jarret2016AdiabaticMethods,LackeyGithub}.

In \cref{sec:SSMC}, we introduce SSMC. In \cref{sec:GWW}, we introduce GWW and show convergence of SSMC to GWW. In \cref{sec:NumApprox}, we introduce our usual numerical approach to simulating SSMC. In \cref{sec:examples}, we provide examples that exhibit exponential speedups for SSMC over GWW and QA. Finally, in \cref{sec:general}, we demonstrate that SSMC is capable of performing desirable search procedures (in this case, we focus on gradient descent against a biased walk) in more general search spaces. 

\section{Substochastic Monte Carlo}\label{sec:SSMC}

Substochastic Monte Carlo (SSMC) refers to numerical algorithms based on simulating a renormalized continuous time substochastic process. Conceptually, these are similar to Fleming-Viot processes for approximating the dynamics of an absorbing Markov chain \cite{fleming1979some}. In the language of Fleming-Viot processes, SSMC diffuses a population of walkers on a graph while an objective function governs the rate at which a walker is absorbed, or ``dies.'' Each absorbed walker is repopulated by transporting it to the site of a randomly selected surviving walker.

The underlying dynamics of an SSMC instance is governed by a time-dependent transition rate matrix $H(t)$ through the diffusion equation,
\begin{equation}\label{eqn:heat}
    \begin{cases}
        \frac{d \psi}{d t}(t;y) = - \sum_x H(t)_{y,x}\psi(t;x) \\
        \psi(0;x) = \psi_0(x).
    \end{cases}
\end{equation} 
Here, $\psi_0$ is some ideal initial distribution and $\psi(T)$ encodes the solution to some optimization problem. In the setting of quantum annealing \cite{farhi2000quantum,farhi2002quantum}, $H$ would typically take the form
\begin{equation}\label{eqn:quantum}
    H(t) = a(t)L + b(t)W.
\end{equation}
Here $L$ is a (weighted) graph Laplacian of some search graph $G$ with vertices $V(G)$. Namely, for vertices $x,y\in V(G)$ the entry $L_{yx} = -w_{yx}$, where the edge weight $w_{yx}$ is the transition rate from $x$ to $y$; along the diagonal, $L_{xx} = \sum_{y\in V(G)\setminus\{x\}} L_{yx}$. The matrix $W$ is diagonal with entries $W_{xx} = E_x$, where the goal is to find $x\in V(G)$ that minimize $\{E_x\}$. Typically one takes $a(t) = 1-t/T$ and $b(t)=t/T$ and thus $H(t)$ simply interpolates between some ``driving dynamics'' encoded by $L$ and the optimization problem $W$. Such an interpolation is usually called an ``annealing schedule.'' The hope in adiabatic optimization is that if one takes $T$ to be sufficiently large, then the solution to \cref{eqn:heat} at each time $t$ remains close to the lowest eigenvector of $H(t)$. At time $t = T$ this is supported on elements of $V(G)$ that minimize $\{E_x\}$, hence solving the problem with bounded probability.

In general, \cref{eqn:heat} defines a ``substochastic'' process as follows. The initial site of this process is a random variable $X(0)$ whose distribution is governed by the initial distribution $\psi_0$:
\begin{equation*}
    \Pr{X(0) = x} = \psi_0(x).
\end{equation*}
At times $t>0$ the site of the process $X(t)$ has distribution $\psi(t)$, the solution of \cref{eqn:heat}. Owing to the objective matrix $W$, this process is substochastic. For instance, if $H = L + W$ does not depend on time, then the solution is given by the matrix exponential
\begin{equation*}
    \psi(t;y) = \sum_x (e^{-(L+W)t})_{y,x}\psi(0;x)
\end{equation*}
and the transition probabilities for the process are the entries of the matrix
\begin{equation}\label{eqn:sample-transition}
    \Pr{X(t) = y\:|\: X(0) = x} = (e^{-(L+W)t})_{y,x}.
\end{equation} 

Unlike a stochastic process where all probabilities sum to $1$, here $\sum_{y\in V(G)} (e^{-(L+W)t})_{y,x} \leq 1$ and so some probability may be lost over time. In order to recover a stochastic process, we introduce an absorbing site or ``cemetery'', denoted as $\infty$. We can make the substochastic process stochastic by including an appropriate transition rate to $\infty$. Continuing the example from \cref{eqn:sample-transition}, we define
\begin{equation*}
    \Pr{X(t) = \infty\:|\: X(0) = x} = 1 - \sum_{y\in V(G)} (e^{-(L+W)t})_{y,x},
\end{equation*}
and so $\{X(t)\}_{t\geq 0}$ defines a stochastic process on $V(G)\cup\{\infty\}$. A substochastic Monte Carlo algorithm is a numerical simulation of the conditional, or renormalized, distribution on $x\in V(G)$:
\begin{equation}\label{eqn:ssmc}
    \Pr{X(t) = x \:|\: X(t) \not=\infty} = \frac{\Pr{X(t) = x}}{\sum_{x'\in V(G)} \Pr{X(t) = x'}}.
\end{equation}

Note that shifting the objective $W \mapsto W + a I$ uniformly scales the transition rate to any non-cemetery state by $e^{-at}$. But as seen in \cref{eqn:ssmc}, this scaling cancels in numerator and denominator and hence SSMC algorithms are invariant under such shifts, as long the process remains substochastic.

In any case, the likelihood of being in the state $\infty$ typically increases exponentially in $t$ and so directly simulating the process of \cref{eqn:ssmc} is generally not viable. In this work, we take our cue from Fleming-Viot processes generalized to time-dependent dynamics \cite{fleming1979some}. We generate a population of $N$ particles, or ``walkers,'' according to an initial distribution $\psi_0$. Each walker then moves independently according to the process law $\{X(t)\}_{t\geq 0}$. However, whenever the process would have a walker die, that is transition to $\infty$, it instead moves immediately to the site of another randomly selected walker. Note that in \cite{Jarret2016AdiabaticMethods}, we described a similar, but distinct SSMC algorithm.

As a process, SSMC is described by a vector $\xi(t) = (\xi_1(t), \dots, \xi_N(t))$ where $\xi_i(t)$ indicates the site of walker $i$ at time $t$. Let us write $\theta(t)_x = \sum_{y} H(t)_{y,x}$ for the ``death'' rate at time $t$ and site $x$. The number of walkers at $x$ is given by the statistic
$$\eta(t;x) = \sum_{i=1}^N \mathds{1}_{\{\xi_i = x\}},$$
whose dynamics is given by the nonlinear equation
\begin{equation*}
    \frac{d\eta}{dt}(t;x) = \sum_{y\not= x} (H(t)_{y,x}\eta(t;x) - H(t)_{x,y}\eta(t;y) + \tfrac{1}{N-1}\eta(t;x)\eta(t;y) \theta(t)_y) - \eta(t;x)\theta(t)_x.
\end{equation*}
We will also need the empirical distribution $m_t(x) = \frac{1}{N}\eta(t;x)$. The Fleming-Viot literature is primarily focused on the case when the generator $H$ is time-independent and defines an irreducible, absorbing Markov chain. In this case, for finite spaces, $m_t(x) \to \psi(t;x)$ as $N$ gets large and $\psi(t;x)$ converges to the lowest eigenvector of $H$ exponentially quickly in $t$ \cite{Collet2013Quasi-StationaryDistributions,asselah2011quasistationary,cloez2016quantitative}.

One goal of this paper is to demonstrate that we can recover discrete-time algorithms in a limit, by taking SSMC using a discontinuous schedule as follows:
\begin{equation}\label{eqn:sched2}
    H(t) = \sum_{j=1}^{T} \mathbbm{1}_{[j-1,j)}(t)L_j + W.
\end{equation}
Here, $L_j$ is the Laplacian of a subgraph $G_j$ of the search graph, where the algorithm will search during time $t\in [j-1,j)$. For example, in a tree search one may choose $G_j$ to be the subtree consisting only of depth $j-1$ and $j$ vertices and the edges connecting them. We assume that the initial distribution is supported in $G_1$, and that $V(G_j) \cap V(G_{j+1}) \not= \emptyset$. Note that at times $t \leq j$, the only nonzero transitions weights are to nodes in one of the subgraphs $G_1, \dots, G_j$. Hence, we have proven the following lemma.

\begin{lemma}\label{lemma:no-future}
    At stage $j$, for any node $x \not\in \bigcup_{i=1}^j V(G_i)$ we have
    $$\Pr{X(j) = x \:|\: X(j) \not= \infty} = 0.$$
\end{lemma}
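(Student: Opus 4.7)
The plan is to prove the lemma by induction on $j$, establishing the stronger statement that the underlying substochastic density satisfies $\psi(j; x) = 0$ for every $x \not\in \bigcup_{i=1}^j V(G_i)$. Once this is in hand, the claim about the SSMC conditional distribution is immediate from the renormalization formula \cref{eqn:ssmc}, since the numerator $\psi(j;x)$ vanishes.

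For the base case at $t = 0$, the initial density $\psi_0$ is supported in $V(G_1)$ by assumption, so $\psi(0; x) = 0$ for all $x \not\in V(G_1)$. For the inductive step, assume $\psi(j-1; x) = 0$ for every $x \not\in \bigcup_{i=1}^{j-1} V(G_i)$, and fix an arbitrary $y \not\in \bigcup_{i=1}^{j} V(G_i)$. On the interval $t \in [j-1, j)$, the schedule \cref{eqn:sched2} reduces to the time-independent generator $H(t) = L_j + W$. Since $y \not\in V(G_j)$, no edges of $G_j$ are incident to $y$, so the entire row $y$ of $L_j$ vanishes; combined with the fact that $W$ is diagonal, this yields $H(t)_{y,x} = 0$ for all $x \neq y$ and $H(t)_{y,y} = W_{y,y}$. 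Substituting into \cref{eqn:heat} decouples the evolution at $y$ into the scalar ODE
$$\frac{d\psi}{dt}(t; y) = -W_{y,y}\,\psi(t; y).$$
The inductive hypothesis supplies the initial condition $\psi(j-1; y) = 0$ (since $y \not\in \bigcup_{i=1}^{j-1} V(G_i)$), so this ODE forces $\psi(t;y) \equiv 0$ on $[j-1, j]$, and in particular $\psi(j;y) = 0$, completing the induction.

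The only subtlety worth highlighting is the decoupling step: since $L_j$ only encodes edges of $G_j$, a node outside $V(G_j)$ has an identically zero row in $L_j$, so mass at such a node can only decay under the diagonal $W$ term during stage $j$ and cannot be replenished by diffusion from other sites. No walker-level argument is needed because the lemma concerns the substochastic distribution $\psi$; the SSMC walker population merely introduces a strictly positive normalization, which cannot create mass at previously unoccupied vertices.
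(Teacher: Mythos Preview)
Your proof is correct and follows the same idea as the paper's, which simply observes in the paragraph preceding the lemma that at times $t \leq j$ the only nonzero transition weights are to nodes in $G_1, \dots, G_j$; you have merely spelled out this observation as an explicit induction with the decoupled scalar ODE. The one cosmetic wrinkle is that your inductive hypothesis as written, $\psi(j-1;x)=0$ for $x \notin \bigcup_{i=1}^{j-1} V(G_i)$, reads oddly at $j=1$ (the union is empty), but your stated base case already supplies the correct starting point $\psi(0;\cdot)$ supported in $V(G_1)$, so the argument goes through.
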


In the next section we analyze the process $\Pr{X(j) = x \:|\: X(j) \not= \infty}$ for a search tree and compare this to the ``Go-with-the-Winners'' algorithm. In \Cref{sec:NumApprox}, we illustrate how SSMC can be simulated with a population of walkers $\xi(t)$ as above, and in \Cref{sec:examples} provide examples where SSMC has exponential speedup over Go-with-the-Winners.

\section{Go-with-the-winners}\label{sec:GWW}

The ``Go-with-the-Winners'' algorithms (GWW), introduced to study some dynamics of genetic algorithms, were formulated in terms of search algorithms for finding a maximal depth leaf of a tree \cite{AV94}. In that work, ``Algorithm 1'' is closest in spirit to a Fleming-Viot process, which we also refer to as \Cref{alg:GWW1*}, however we specifically use the description from \cite{vazirani1999go}.

\begin{alg}\label{alg:GWW1*}
    Repeat the following strategy:

    Let $\xi(j)$ be the set of walkers at stage $j$. If all $\xi_i(j)$ are leaves, output some random $\xi_i(j)$. Otherwise, let $U$ be the indices of walkers at nonleaves. For each $i\in U$ let $\xi_i(j+1)$ be a randomly selected child of $\xi_i(j)$. Then for each $i \not\in U$, choose a random $k\in U$ and set $\xi_i(j+1) = \xi_k(j+1)$. 
\end{alg}

We can also define a slight variant of \Cref{alg:GWW1*}, where any walker at a leaf will jump \emph{uniformly to nodes occupied by other walkers}. We will see in a moment that these dynamics are also replicated by SSMC.

\begin{ralg}{alg:GWW1*}\label{alg:GWW1}
    Repeat the following strategy:

    Let $\xi(j)$ be the set of walkers at stage $j$. If all $\xi_i(j)$ are leaves, output some random $\xi_i(j)$. Otherwise, let $U$ be the indices of walkers at nonleaves. For each $i\in U$ let $\xi_i(j+1)$ be a randomly selected child of $\xi_i(j)$. Let $V$ be the nodes at level $j+1$ occupied by at least one walker; for $i \not\in U$, choose a random $x\in V$ and set $\xi_i(j+1) = x$. 
\end{ralg}

Although this distinction seems minor, these two algorithms will occasionally exhibit drastically different behavior. Nonetheless we will provide examples where SSMC achieves speedups over both in the following sections. As an aside, \cite[Algorithm 2]{AV94} is a method for approximating \Cref{alg:GWW1*} where the jump process is replaced by a birth/death process, each walker at a leaf dies, and the expected total population is maintained by spawning walkers at nonleaves. In \cite{Jarret2016AdiabaticMethods}, we provided an analogue to this method for SSMC.

To recover behavior like \Cref{alg:GWW1*} with SSMC, we first need to decompose the tree into subgraphs $\{G_j\}$. Let us write $V_j$ for the nodes of $G$ of depth $j$. Stage $j$ of the algorithm involves walkers moving from nodes at depth $j-1$ to those at depth $j$, so we set the vertices of our subgraph as $V(G_j) = V_{j-1}\cup V_j$ and the edges of $G_j$ are all the connections between these nodes in the tree. To ensure walkers descend, we set an objective function whose value at nodes in $V_{j-1}$ is much larger than at those in $V_j$. For ease of analysis we take this to be $E$, independent of the stage. That is, if the depth of the tree is $D$, then the root has objective value $DE$, the nodes at depth $1$ have value $(D-1)E$, and so on. However, from \cref{lemma:no-future} the process is supported on vertices no deeper than $j$. Above we noted that shifting the objective value does not affect SSMC, and so we may assume that during stage $j$ the objective value vanishes on nodes of $V_j$, takes value $E$ on notes of $V_{j-1}$ and so on.

Let us write  $\{X_E(t)\}_{t\geq 0}$ for the SSMC process \cref{eqn:ssmc}, where the dynamics are given by \cref{eqn:heat,eqn:sched2} using the graph and objective function as described above. 

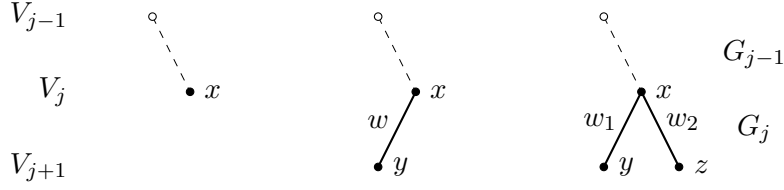
\begin{figure}[t]
\begin{center}
\begin{tikzpicture}
\draw[anchor=east] (-1,2) node {$V_{j-1}$};
\draw[anchor=east] (-1,1) node {$V_j$};
\draw[anchor=east] (-1,0) node {$V_{j+1}$};
\draw[very thin, dashed] (0,2) -- (0.5,1);
\draw[fill=white] (0,2) circle (0.05);
\draw[fill=black] (0.5,1) circle (0.05);
\draw (0.8,1) node {$x$};
\draw[very thin, dashed] (3,2) -- (3.5,1);
\draw[fill=white] (3,2) circle (0.05);
\draw[thick] (3.5,1) -- (3,0);
\draw (3,0.6) node {$w$};
\draw[fill=black] (3.5,1) circle (0.05);
\draw[fill=black] (3,0) circle (0.05);
\draw (3.8,1) node {$x$};
\draw (3.3,0) node {$y$};
\draw[very thin, dashed] (6,2) -- (6.5,1);
\draw[fill=white] (6,2) circle (0.05);
\draw[thick] (6.5,1) -- (6,0);
\draw[thick] (6.5,1) -- (7,0);
\draw[fill=black] (6.5,1) circle (0.05);
\draw[fill=black] (6,0) circle (0.05);
\draw[fill=black] (7,0) circle (0.05);
\draw (5.95,0.6) node {$w_1$};
\draw (7.05,0.6) node {$w_2$};
\draw (6.8,1) node {$x$};
\draw (6.3,0) node {$y$};
\draw (7.3,0) node {$z$};
\draw (8,1.5) node {$G_{j-1}$};
\draw (8,0.5) node {$G_j$};
\end{tikzpicture}
\end{center}
\caption{Three types of components in the subgraph $G_j$ in a binary tree.}\label{fig:graphs}
\end{figure}

\begin{lemma}\label{lemma:compute}
    Suppose at time $j$ the process is at a node $x\in G$. Then:
    \begin{enumerate}
        \item if $x \in V_k$ for $k< j$ or $x \in V_j$ is a leaf of $G_j$ then the probability the process has not died by time $j+1$ is negligible in $E$;
        \item if $x\in V_j$ is not a leaf of $G_j$, then the probability the process is at $x$ at time $j+1$ is $\bigO(\frac{1}{E^2})$;
        \item if $y\in V_{j+1}$ is any child of $x$ with transition rate $w$ on the edge $(x,y)$, then the probability the process is at $y$ at time $j+1$ is $\frac{w}{E}e^{-w} + \bigO(\frac{1}{E^2})$ (independent of the number of children $x$ has).
    \end{enumerate}  
\end{lemma}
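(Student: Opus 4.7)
Because the generator $H = L_{j+1} + W$ is time-independent on $[j, j+1)$, the probability of the (substochastic) process being at $y$ at time $j+1$ having started at $x$ at time $j$ is exactly the matrix entry $(e^{-H})_{y,x}$. The subgraph $G_{j+1}$ on $V_j \cup V_{j+1}$ decomposes into (i) disjoint stars centered at each non-leaf node of $V_j$, with its children in $V_{j+1}$ as leaves of the star, and (ii) isolated nodes at the tree-leaves of $V_j$; vertices of $V_k$ for $k < j$ lie outside $V(G_{j+1})$, so $L_{j+1}$ annihilates them. Since $H$ is block-diagonal across these components (and purely diagonal outside $V(G_{j+1})$), the matrix exponential decouples and each block can be analyzed separately.

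\textbf{Case 1.} On a single-vertex block --- corresponding to $x \in V_k$ with $k<j$, or to $x$ an isolated vertex of $G_{j+1}$ --- the operator $H$ acts as the scalar $W_{xx}$. Under the stage-$(j+1)$ shift convention, $W_{xx} = E$ when $x$ is a $V_j$-leaf and $W_{xx} \geq 2E$ when $x \in V_k$ with $k<j$. Thus $(e^{-H})_{xx} \leq e^{-E}$, negligible in the sense of being smaller than any polynomial in $1/E$.

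\textbf{Cases 2 and 3.} Let $x \in V_j$ have children $y_1, \ldots, y_\ell \in V_{j+1}$ with edge weights $w_i := w_{xy_i}$, and set $w_{\mathrm{tot}} := \sum_i w_i$. On the star component containing $x$, the ODE $\dot\psi = -H\psi$ with $\psi_x(0)=1$ and $\psi_{y_i}(0)=0$ reads
\[
\dot\psi_x = -(w_{\mathrm{tot}}+E)\psi_x + \sum_i w_i \psi_{y_i}, \qquad \dot\psi_{y_i} = -w_i \psi_{y_i} + w_i \psi_x.
\]
The plan is to solve the $\psi_{y_i}$ equation by Duhamel, obtaining $\psi_{y_i}(t) = w_i \int_0^t e^{-w_i(t-s)}\psi_x(s)\,ds$, and to expand $\psi_x$ perturbatively in $1/E$. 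Taking the zeroth-order approximation $\psi_x(s) = e^{-(w_{\mathrm{tot}}+E)s} + \bigO(1/E^2)$ and integrating over $[0,1]$ gives
\[
\psi_{y_i}(1) = \frac{w_i e^{-w_i}}{E + w_{\mathrm{tot}} - w_i}\bigl(1 - e^{-(E + w_{\mathrm{tot}} - w_i)}\bigr) + \bigO(1/E^2) = \frac{w_i}{E}e^{-w_i} + \bigO(1/E^2),
\]
where the geometric expansion $1/(E + w_{\mathrm{tot}} - w_i) = 1/E + \bigO(1/E^2)$ absorbs all dependence on the total weight $w_{\mathrm{tot}}$ --- and hence on the other children --- into the error term, giving item (3). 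Feeding $\psi_{y_i}$ back into the equation for $\psi_x$, the returning contribution $\int_0^1 e^{-(w_{\mathrm{tot}}+E)(1-s)}\sum_i w_i \psi_{y_i}(s)\,ds$ is again $\bigO(1/E^2)$ by the same Duhamel argument, while the direct-decay piece $e^{-(w_{\mathrm{tot}}+E)}$ is exponentially small, yielding $\psi_x(1) = \bigO(1/E^2)$ --- item (2).

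\textbf{Main obstacle.} The substantive work is the perturbative bookkeeping: one must check that the $\bigO(1/E^2)$ remainders propagating between $\psi_x$ and the $\psi_{y_i}$ pick up no small denominators and remain uniform in the finitely many edge weights. Because $E$ dominates every other scale, the Duhamel kernels $e^{-w_i(t-s)}$ are uniformly bounded on $[0,1]$ and the geometric expansion of $1/(E + w_{\mathrm{tot}} - w_i)$ has no resonance, so the bounds go through. The ``independent of the number of children'' clause in item (3) is precisely the observation that all of the $w_{\mathrm{tot}}$-dependence, and hence the dependence on the number and weights of the sibling edges, is pushed into the $\bigO(1/E^2)$ correction.
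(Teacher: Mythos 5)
Your proof is correct, and it takes a genuinely different route from the paper's. The paper proceeds by perturbative diagonalization: it constructs approximate eigenvalues and eigenvectors of the star-component generator order by order in $1/E$, which forces it to assume the edge weights $w_1,\dots,w_d$ are pairwise distinct (the eigenvector entries contain denominators $w_j - w_k$) and then to dispose of the degenerate case by a separate continuity-under-perturbation argument. Your Duhamel/variation-of-parameters approach avoids that entirely: the only denominators that appear are $E + w_{\mathrm{tot}} - w_i$, which are uniformly large, so equal weights cause no difficulty and no auxiliary perturbation step is needed. The one place your write-up is thinner than it should be is the justification of the ``zeroth-order approximation'' $\psi_x(s) = e^{-(w_{\mathrm{tot}}+E)s} + \bigO(1/E^2)$, which on its face is circular (it presupposes the feedback term $\sum_i w_i\psi_{y_i}$ is small). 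The fix is a routine bootstrap: substochasticity gives $0 \le \psi_{y_i}(s) \le w_i \int_0^s \psi_x(u)\,du \le w_i/E + \bigO(1/E^2)$ uniformly on $[0,1]$ from the crude bound $\psi_x(u) \le e^{-Eu} + \bigO(1/E)$, and feeding this back through $\int_0^t e^{-(w_{\mathrm{tot}}+E)(t-s)}\,ds = \bigO(1/E)$ closes the loop at order $1/E^2$; you should state this explicitly rather than gesture at it. What the paper's spectral computation buys in exchange for its extra bookkeeping is the full set of decay modes $e^{-w_j t}$ for intermediate times $t$, which it reuses in the appendix; for the lemma as stated, your argument is the more economical one.
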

\begin{proof}
  
    The subgraph $G_j$ decomposes into connected components of three types as to whether a node at depth $j$ is (i) a leaf, (ii) one child, or (iii) has multiple children at depth $j$. Note that nodes at layers $k < j$ are disconnected from subgraph $G_j$. See \cref{fig:graphs}.
  
    If either $x\in V_k$ for $k<j$ or $x\in V_j$ is a leaf, then the Laplacian component is just $L = 0$. The (relative) objective value is $(j-k+1)E$, and therefore the transition rate matrix is also $H = (j-k+1)E$. After time $t\in [0,1)$, we have
    \begin{equation}\label{eqn:ssmc-transition00}
        \mathrm{Pr}\{X_E(j+t) = x \:|\: X_E(j) = x\} =  e^{-(j-k+1)Et}.
    \end{equation}
    In particular the probability the process survives to time $j+1$ is $e^{-(j-k+1)E}$.

    If $x\in V_j$ has a single child $y$ at depth $j+1$, and $w$ is the weight attached to the edge $(x,y)$, then our transition rate matrix on this component is
    \begin{equation*}
        H = \left(\begin{array}{cc} w & -w \\ -w & w\end{array}\right) + \left(\begin{array}{cc} E & 0 \\ 0 & 0\end{array}\right) = \begin{pmatrix} E + w & -w \\ -w & w\end{pmatrix}.
    \end{equation*}
    We can exponentiate this matrix in closed form to compute the transition probabilities at any time $t$, see \cref{sec:computations}, however since we are only interested in the limit of large $E$, we will work asymptotically in $\epsilon = \frac{w}{E}$. Diagonalizing the matrix $\tilde{H} = \frac{1}{E}H$ to order $\epsilon^3$ we have:
    \begin{equation*}
        \tilde{H}
        \begin{pmatrix} 1 & \epsilon \\ -\epsilon & 1 \end{pmatrix} =
        \begin{pmatrix} 1 & \epsilon \\ -\epsilon & 1 \end{pmatrix}
        \begin{pmatrix} 1 + \epsilon + \epsilon^2 & 0 \\ 0 & \epsilon - \epsilon^2 \end{pmatrix} + \bigO(\epsilon^3).
    \end{equation*}

    To derive the action of $e^{-Ht} = e^{-E\tilde{H}t}$ we compute
    $$
    e^{-Ht}\begin{pmatrix} 1 & \epsilon \\ -\epsilon & 1 \end{pmatrix} =
    \begin{pmatrix} 1 & \epsilon \\ -\epsilon & 1 \end{pmatrix}
    \begin{pmatrix} e^{-(E + w)t + \bigO(\epsilon)} & 0 \\ 0 & e^{-wt + \bigO(\epsilon)} \end{pmatrix} + \bigO(\epsilon^2).
    $$
    Notice that
    $$
    \begin{pmatrix} 1 & \epsilon \\ -\epsilon & 1 \end{pmatrix}
    \begin{pmatrix} 1 \\ \epsilon \end{pmatrix} = \begin{pmatrix} 1 \\ 0 \end{pmatrix} + \bigO(\epsilon^2).
    $$
    And so
    \begin{align*}
        e^{-Ht}\begin{pmatrix} 1 \\ 0 \end{pmatrix} &= \begin{pmatrix} 1 & \epsilon \\ -\epsilon & 1 \end{pmatrix} \begin{pmatrix} e^{-(E + w)t + \bigO(\epsilon)} \\ \epsilon e^{-wt + \bigO(\epsilon)} \end{pmatrix} + \bigO(\epsilon^2)\\
        &=\begin{pmatrix} e^{-(E + w)t + \bigO(\epsilon)} \\ \epsilon (e^{-wt + \bigO(\epsilon)} - e^{-(E + w)t + \bigO(\epsilon)})\end{pmatrix} + \bigO(\epsilon^2)
    \end{align*}
    That is,
    \begin{align}
       \Pr{X_E(j+t) = x \:|\: X_E(j) = x} &= e^{-(E+w)t + \bigO(1/E)} + \bigO(\tfrac{1}{E^2}),\label{eqn:transition10}\\
       \Pr{X_E(j+t) = y \:|\: X_E(j) = x} &= \frac{w}{E}e^{-wt} + \bigO(\tfrac{1}{E^2}).\label{eqn:transition11}
    \end{align}
    Taking $t=1$ gives the desired results in this case.

    For a node $x\in V_j$ with $d$ children, $y_1, \dots, y_d \in V_{j+1}$, and associated edge weights $w_1, \dots, w_d$, the transition rate matrix is given by
    \begin{equation*}
       H = \begin{pmatrix} 
        E + w_1 + \cdots + w_d & -w_1 & \cdots & -w_d \\
       -w_1 & w_1 & \cdots & 0 \\
       \vdots & & \ddots \\
       -w_d & 0 & \cdots & w_d\end{pmatrix}
    \end{equation*}
    We introduce the notation $b_1 = \sum_{j=1}^d w_j$ and $b_2 = \sum_{j=1}^d w_j^2$.
    
    Let
    \begin{align*}
        \lambda_0 &= 1 + \frac{b_1}{E} + \frac{b_2}{E^2} + \bigO(\tfrac{1}{E^3})\\
       v_0 &= \begin{pmatrix} 1 \\ -\frac{w_1}{E} + \frac{w_1(b_1 - w_1)}{E^2}) \\ \vdots \\ -\frac{w_d}{E} +  \frac{w_d(b_1-w_d)}{E^2})\end{pmatrix} + \bigO(\tfrac{1}{E^3}).
    \end{align*}
    Then a direct computation shows both $\frac{1}{E}Hv_0$ and $\lambda_0v_0$ equal
    \begin{equation*}
        \begin{pmatrix} 1 \\ 0 \\ \vdots \\ 0 \end{pmatrix} + \frac{1}{E}\begin{pmatrix} b_1 \\ -w_1\\ \vdots \\ -w_d\end{pmatrix} + \frac{1}{E^2}\begin{pmatrix} b_2 \\ -w_1^2 \\ \vdots \\ -w_d^2\end{pmatrix} + \bigO(\tfrac{1}{E^3})
    \end{equation*}
    and so $Hv_0 = E\lambda_0v_0 + \bigO(\frac{1}{E^2})$.

    Let us assume for the moment that all the edge weights are distinct. Then for each $j=1, \dots, d$ define
    \begin{align*}
        \lambda_j &= \frac{w_j}{E} - \frac{w_j^2}{E^2} + \bigO(\tfrac{1}{E^3})\\
        v_j &= \begin{pmatrix} \frac{w_j}{E} + \frac{w_j^2}{E^2}\sum_{k\not=j} \frac{w_k}{w_j-w_k} \\ -\frac{1}{E}\frac{w_jw_1}{w_j-w_1} \\ \vdots \\ 1 \\ \vdots \\ -\frac{1}{E}\frac{w_jw_d}{w_j-w_d}\end{pmatrix} + \bigO(\tfrac{1}{E^3})
    \end{align*}
    Then a similar computation shows both $\frac{1}{E}Hv_j$ and $\lambda_jv_j$ equal
    \begin{equation*}
        \frac{1}{E}\begin{pmatrix} 0 \\ 0\\ \vdots \\ w_j \\ \vdots \\ 0\end{pmatrix} + \frac{1}{E^2}\begin{pmatrix} w_j^2 \\ -w_j^2w_1/(w_j-w_1) \\ \vdots \\ -w_j^2 \\ \vdots \\ -w_j^2w_d/(w_j-w_d)\end{pmatrix} + \bigO(\tfrac{1}{E^3})
    \end{equation*}
    and so $Hv_j = E\lambda_jv_j + \bigO(\frac{1}{E^2})$.

    Exponentiation, as in the proof of \cref{lemma:compute}, gives
    \begin{align*}
        e^{-Ht}v_0 &= e^{-(E + b_1)t + \bigO(1/E)}v_0 +\bigO(\tfrac{1}{E^2}),\\
        e^{-Ht}v_j &= e^{-w_jt + \bigO(1/E)}v_j + \bigO(\tfrac{1}{E^2})\text{ for $j=1, \dots, d$.}
    \end{align*}
    We also see
    \begin{equation*}
        \begin{pmatrix} 1 \\ 0 \\ \vdots \\ 0\end{pmatrix} = v_0 + \frac{w_1}{E}v_1 + \cdots \frac{w_d}{E}v_d + \bigO(\tfrac{1}{E^2}),
    \end{equation*}
    and therefore
    \begin{align*}
        e^{-Ht} \begin{pmatrix} 1 \\ 0 \\ \vdots \\ 0\end{pmatrix} &= e^{-(E + b_1)t +\bigO(1/E)}v_0 + \sum_{j=1}^d \frac{w_1}{E}e^{-w_jt +\bigO(1/E)} v_j + \bigO(\tfrac{1}{E^2})\\
        &= \frac{1}{E}\begin{pmatrix} 0 \\  w_1e^{-w_1t +\bigO(1/E)} \\ \vdots \\ w_de^{-w_dt +\bigO(1/E)} \end{pmatrix} + \bigO(\tfrac{1}{E^2}).
    \end{align*}
    Expressing this in terms of transition probabilities has
    \begin{align*}
        \Pr{X_E(j+t) = x \:|\: X_E(j) = x} &= \bigO\left(\tfrac{1}{E^2}\right),\\
        \Pr{X_E(j+t) = y_j \:|\: X_E(j) = x} &= \frac{w_j}{E}e^{-w_jt} + \bigO\left(\frac{1}{E}\right),
    \end{align*}
    for each $j=1, \dots, d$.

    Note that the resulting transition probabilities vary smoothly with small perturbations in the weights $w_j$. Therefore if not all the weights are distinct, one perturbs them slightly into distinct values, and the above analysis produces the desired expression. Again taking $t=1$ proves the lemma.
\end{proof}

\begin{theorem}\label{thm:limit-process}
    Let $G$ be a search tree with transition rates $w_{xy} > 0$ from parent node $x$ to child node $y$ and objective function given by $E$ times the height of a node. Define a discrete-time process $\{X(j)\}_{j=0,1,2,\dots}$ to be given by $\Pr{X(0) = \text{root}} = 1$ and for $j \geq 0$: 
    \begin{equation}\label{eqn:transition}
        \Pr{X(j+1) = y} = \frac{\sum_{x\in V_j} a_{xy}\Pr{X(j) = x}}{\sum_{y'\in V_{j+1}}\sum_{x\in V_j} a_{xy'}\Pr{X(j) = x}}
    \end{equation}
    where $a_{xy} = w_{xy}e^{-w_{xy}}$ if $y$ is a child of $x$, and $a_{xy} = 0$ otherwise. Then the statistical difference of this process and $\{X_E(t)\}_{t\geq 0}$ is $\delta(X_E(j),X(j)) = \bigO(\frac{1}{E})$.
\end{theorem}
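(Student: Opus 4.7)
The plan is to proceed by induction on $j$, showing that the renormalized distribution $p_j(x) := \Pr{X_E(j) = x \mid X_E(j) \neq \infty}$ agrees with $q_j(x) := \Pr{X(j) = x}$ up to $\bigO(1/E)$ in total variation. The base case $j=0$ is immediate since both distributions are concentrated at the root. I will carry the hypothesis through one stage by combining the Markov property of the underlying substochastic process with the one-step estimates supplied by \cref{lemma:compute}.

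For the inductive step, \cref{lemma:no-future} together with the induction hypothesis implies that $p_j$ is essentially supported on $V_j$ up to $\bigO(1/E)$ error. By the Markov property,
\[
\Pr{X_E(j+1) = y} = \sum_x \Pr{X_E(j+1) = y \mid X_E(j) = x}\,\Pr{X_E(j) = x}.
\]
For each $y \in V_{j+1}$ whose unique tree-parent is $x \in V_j$, part (3) of \cref{lemma:compute} yields $\Pr{X_E(j+1) = y \mid X_E(j) = x} = \tfrac{1}{E} a_{xy} + \bigO(1/E^2)$. Starting nodes in $V_k$ for $k < j$ and leaves of $V_j$ contribute exponentially small mass by part (1), while non-leaf starting nodes in $V_j$ contribute $\bigO(1/E^2)$ to $V_j$ itself by part (2) but not to $V_{j+1}$. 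Consequently the only nontrivial contribution to the numerator for $y \in V_{j+1}$ comes from $y$'s parent in the tree.

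Substituting these estimates into the renormalization \cref{eqn:ssmc} and factoring the common leading $1/E$ from numerator and denominator yields
\[
\Pr{X_E(j+1) = y \mid X_E(j+1) \neq \infty} = \frac{\sum_{x \in V_j} a_{xy}\,p_j(x)}{\sum_{y' \in V_{j+1}}\sum_{x \in V_j} a_{xy'}\,p_j(x)} + \bigO(1/E),
\]
and a final application of the induction hypothesis replaces each $p_j(x)$ by $q_j(x)$ at the same cost, recovering \cref{eqn:transition}. The main technical obstacle is verifying that the per-step $\bigO(1/E^2)$ corrections from \cref{lemma:compute} do not inflate after dividing by a survival normalization that is itself of order $1/E$. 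One handles this by noting that both the dominant $1/E$ term and its $\bigO(1/E^2)$ correction in the numerator are sums over the same finite set of children, so the ratio cleanly produces only an $\bigO(1/E)$ relative error per step. Since the tree has bounded depth, these per-stage errors accumulate at most linearly, preserving the $\bigO(1/E)$ bound on $\delta(X_E(j), X(j))$ at each integer time.
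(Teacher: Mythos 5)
Your proposal is correct and follows essentially the same route as the paper's own proof: induction on the stage $j$, with the one-step transition estimates of \cref{lemma:compute} feeding into the renormalization \cref{eqn:ssmc}, and the leading $1/E$ factor cancelling between numerator and denominator so that the $\bigO(1/E^2)$ corrections become $\bigO(1/E)$ relative errors. Your closing remark about the per-stage errors accumulating only linearly over a bounded-depth tree is a point the paper's proof leaves implicit, so no changes are needed.
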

\begin{proof}

    Initially, we have $\Pr{X_E(0) = \text{root}} = \Pr{X(0) = \text{root}} = 1$ and so $\delta(X_E(0),X(0)) = 0$. Inductively, assume that at time $t = j$ we $\delta(X_E(j),X(j)) =\bigO(\frac{1}{E})$. 
    
    We first compute the normalization term $\Pr{X_E(j+1) \not= \infty \:|\: X_E(j)\not=\infty}$. In general this is
    \begin{align*}
        &\Pr{X_E(j+1) \not=\infty\:|\: X_E(j)\not=\infty}\\
        &\quad = \sum_{x\in G} \Pr{X_E(j+1) \not=\infty\:|\:X_E(j) = x}\Pr{X_E(j) = x\:|\: X_E(j)\not=\infty}\\
        &\quad = \sum_{x,y\in G} \Pr{X_E(j+1) =y\:|\:X_E(j) = x}\Pr{X_E(j) = x\:|\: X_E(j)\not=\infty}.
    \end{align*}
    Note that by \cref{lemma:compute}, $\Pr{X_E(j+1) =y\:|\:X_E(j) = x}$ is negligible in $E$ unless $x,y \in G_j$. Even then, $\Pr{X_E(j+1) =y\:|\:X_E(j) = x} =\bigO(\frac{1}{E^2})$ unless $y$ is a child of $x$. So, given $x\in V_j$ with child $y\in V_{j+1}$ \Cref{lemma:compute} gives
    \begin{equation*}
        \Pr{X_E(j+1) =y\:|\:X_E(j) = x} = \frac{w_{xy}}{E}e^{-w_{xy}} +\bigO(\tfrac{1}{E^2}).
    \end{equation*}
    Therefore, since $\delta(X_E(j),X(j)) =\bigO(\frac{1}{E})$ by hypothesis
    \begin{align*}
        &\Pr{X_E(j+1) \not=\infty\:|\: X_E(j+1)\not=\infty}\\
        &\quad = \sum_{y\in V_{j+1}}\sum_{x\in V_j} \frac{a_{xy}}{E} \Pr{X_E(j) = x\:|\: X_E(j) \not=\infty} +\bigO(\tfrac{1}{E^2})\\
        &\quad = \sum_{y\in V_{j+1}}\sum_{x\in V_j} \frac{a_{xy}}{E} \Pr{X(j) = x} +\bigO(\tfrac{1}{E^2}).
    \end{align*}
    
    Now for a general $y\in V_{j+1}$ with parent $x\in V_j$, again using \Cref{lemma:compute} and the inductive hypothesis,
    \begin{align*}
        &\Pr{X_E(j+1) = y\:|\: X_E(j+1)\not=\infty}\\
        &\quad = \frac{\Pr{X_E(j+1) = y\:|\:X_E(j) = x}\Pr{X_E(j) = x\:|\: X_E(j)\not=\infty}}{\Pr{X_E(j+1) \not= \infty \:|\: X_E(j)\not=\infty}}\\
        &\quad = \frac{\left(\frac{a_{xy}}{E} +\bigO(\frac{1}{E^2})\right)\left(\Pr{X(j) = x} +\bigO(\frac{1}{E})\right)}{\sum_{y'\in V_{j+1}}\sum_{x'\in V_j} \frac{a_{x'y'}}{E} \Pr{X(j) = x'} +\bigO(\tfrac{1}{E^2})}\\
        &\quad = \frac{\sum_{x\in V_j} a_{xy}\Pr{X(j) = x}}{\sum_{y'\in V_{j+1}}\sum_{x'\in V_j} a_{x'y'}\Pr{X(j) = x'}} +\bigO(\tfrac{1}{E})\\
        &\quad = \Pr{X(j+1) = y} +\bigO(\tfrac{1}{E}).
    \end{align*}
    Therefore $\delta(X_E(j+1),X(j+1)) =\bigO(\frac{1}{E})$.
\end{proof}

The numerator in \cref{eqn:transition} is just $a_{xy}\mathrm{Pr}\{X(j) = x\}$ where $x$ is the parent of $y$, as all other terms in the sum vanish. For SSMC to match \cref{alg:GWW1*}, where walkers at a nonleaf node transition to child nodes uniformly at random, we take $w_{xy}$ so that $a_{xy} = w_{xy}e^{-w_{xy}} = \frac{1}{2d_x}$ when $y$ is a child of $x$, which has with $d_x$ children. Note the additional factor of $\frac{1}{2}$ in $a_{xy}$ is irrelevant since it cancels in the numerator and denominator of \cref{eqn:transition}, but is necessary since $we^{-w} \leq \frac{1}{2}$. 

On the other hand, if we take $w_{xy}$ constant, we obtain an algorithm closer to \cref{alg:GWW1} where walkers transition to the next layer uniformly in the nodes at that layer. We state this result formally as follows.

\begin{corollary}
    Let $G$ be a search tree whose edge weights are all equal. Then at time $t = j$ the statistical difference between the distribution of $X_E(t)$ and the uniform distribution on the nodes at depth $j$ is $\bigO(\tfrac{1}{E})$.
\end{corollary}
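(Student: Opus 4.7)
The plan is to use \Cref{thm:limit-process} to reduce the problem to analyzing the discrete-time limit process $\{X(j)\}$, which I will show is \emph{exactly} the uniform distribution on $V_j$ at every stage whenever all edge weights agree. Once exact uniformity of $X(j)$ is established, the triangle inequality for total variation distance gives
\begin{equation*}
    \delta(X_E(j), \mathrm{Unif}(V_j)) \le \delta(X_E(j), X(j)) + \delta(X(j), \mathrm{Unif}(V_j)) = \bigO(\tfrac{1}{E}) + 0,
\end{equation*}
which is precisely the claimed bound.

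To prove uniformity of $X(j)$, I would proceed by induction on $j$. Because all edge weights equal a common value $w$, every nonzero coefficient $a_{xy}$ appearing in \cref{eqn:transition} takes the same value $c := w e^{-w}$. The base case $j = 0$ is immediate since $V_0$ is just the root and $\Pr{X(0) = \text{root}} = 1$. For the inductive step, assume $\Pr{X(j) = x} = 1/|V_j|$ for every $x \in V_j$. Fix any $y \in V_{j+1}$ and let $x_y \in V_j$ denote its unique parent in the tree. The numerator in \cref{eqn:transition} then collapses to $c \cdot \Pr{X(j) = x_y} = c/|V_j|$, which is the same constant for every $y \in V_{j+1}$. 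Summing this constant over $V_{j+1}$ gives a denominator of $c \cdot |V_{j+1}|/|V_j|$, and the ratio is $1/|V_{j+1}|$, completing the induction.

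There is essentially no serious obstacle here; the proof is driven entirely by the fact that in a tree each node at depth $j+1$ has a unique parent at depth $j$, which is what forces the numerators of \cref{eqn:transition} to agree across $y$ once the weights and the previous-layer distribution are constant. The only minor point worth flagging is that nodes in $V_j$ which happen to be leaves of the full tree still carry probability $1/|V_j|$ but contribute nothing to any numerator at level $j+1$; they drop out cleanly because the normalization in \cref{eqn:transition} automatically restricts to nodes $y' \in V_{j+1}$ with a parent in $V_j$. The only approximation in the whole argument is the $\bigO(\tfrac{1}{E})$ term inherited from \Cref{thm:limit-process}.
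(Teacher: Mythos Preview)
Your proposal is correct and follows essentially the same approach as the paper: invoke \Cref{thm:limit-process} to reduce to the limiting process $X(j)$, then prove by induction on $j$ that $X(j)$ is exactly uniform on $V_j$ using the transition formula \cref{eqn:transition} and the fact that each depth-$(j{+}1)$ node has a unique parent. Your write-up is slightly more explicit (naming the common value $c = we^{-w}$, spelling out the triangle inequality, and remarking on leaf nodes in $V_j$), but the argument is the same.
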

\begin{proof}
    As $\delta(X_E(j),X(j)) = \bigO(\frac{1}{E})$, the result follows if the limiting distribution $X(j)$ is uniform. Again we work inductively. At time $t = 0$ we have $\Pr{X(0) = \text{root}} = 1$, and so the distribution is the uniform at depth $0$. Assume that at time $t = j-1$ for every $x\in V_{j-1}$,
    \begin{equation*}
        \Pr{X(j-1) = x} = \frac{1}{|V_{j-1}|}.
    \end{equation*}
    Let $y$ be a node at depth $j$ and let $x$ be the parent of $y$. Then from \cref{eqn:transition}
    \begin{align*}
        \Pr{X(j) = y} &= \frac{a_{xy}\Pr{X(j-1) = x}}{\sum_{y'\in V_j}\sum_{x'\in V_{j-1}} a_{x'y'}\Pr{X(j-1) = x'}}\\
        &= \frac{a_{xy}/|V_{j-1}|}{\sum_{y'\in V_j}\sum_{x'\in V_{j-1}} a_{x'y'}/|V_{j-1}|} = \frac{a_{xy}}{\sum_{y'\in V_j} a_{x'y'}} = \frac{1}{|V_{j}|}.
    \end{align*}
\end{proof}

\section{The numerical simulation}\label{sec:NumApprox}

The purpose of this section is to present our standard simulation and in \cref{sec:examples} show that these simulations achieve speedups over \Cref{alg:GWW1*} and \Cref{alg:GWW1}. The method of the previous section already demonstrates that if we are able to explicitly choose our transition rates according to the continuous time process, we can expect better performance than what follows. Nonetheless, in arbitrarily large spaces with unknown objective functions, we cannot explicitly compute transition probabilities via matrix exponentials. In this section, we detail a local search method (like that in \cite{Jarret2016AdiabaticMethods} but simpler to analyze and closer to a pure Fleming-Viot process) setting the stage for more powerful future analyses. We model our distribution $\psi$ empirically as a population of walkers $\xi$. In this section, for ease of presentation, we primarily focus on these algorithms when restricted to the schedule of \cref{eqn:sched2}.

Recall that for our process, we have that \cref{eqn:heat} solved by \cref{eqn:sample-transition} for $t\in[j,j+1)$. We consider the first order approximation of this matrix for a small time step $\Delta t$,
\[
    \psi(t+\Delta t) = e^{-(L_j+W)\Delta t}\psi(t) \approx \left(I - \Delta t (L_j + W) + \bigO(\Delta t^2)\right)\psi(t).
\]
Note that when $\Delta t$ is sufficiently small, 
\[
    (I - \Delta t (L_j + W))_{xy} \geq 0 
\]
and
\[
    \sum_x(I - \Delta t (L_j + W))_{xy} \leq 1. 
\]
In other words, there exists a choice of $\Delta t$ such that $T(\Delta t) = (I - \Delta t (L_j + W))$ is always a substochastic transition matrix. Here, $T:V \rightarrow V\cup\{\infty\}$. Also, for $\overline{W}_t = W - I \min_{j=1,\dots,N} W(\xi_j(t))$, we allow $W\mapsto \overline{W}_t$ and consider the corresponding transition matrix $\overline{T}(\Delta t)$. (Since $W$ is a diagonal operator, we use the shorthand $W(x) = W_{x,x}$.)

Now we propose the following simulation:
\begin{alg}\label{alg:SSMC}
    While $t \in [0,T]$ repeat the following procedure: 
    
    Suppose we have a collection of $N$ walkers with configuration $\xi(t) \in V^N$ where by $\xi_i(t)$ we denote the position of walker $i$. Perform the walk prescribed by $\xi(t+\frac{\Delta t}{2}) = \overline{T}(\Delta t) \xi(t)$. If $\xi(t+\frac{\Delta t}{2}) = \xi(t)$, let $\xi(t+1) = \xi(t)$ and increment $t \mapsto \floor{t+1}$. Otherwise, let $U = \left\{i \vert \xi_i(t+\frac{\Delta t}{2}) \neq \infty \right\}$. Perform the transition,
    \[
        \xi_i(t+\Delta t) = \begin{cases}
            \xi_i(t+\frac{\Delta t}{2}) & \text{if $i\in U$} \\
            \xi_k(t+\frac{\Delta t}{2}) & \text{if $i\not\in U$, where $k \in U$ chosen at random.}
        \end{cases}
    \]
    Increment $t \mapsto t+\Delta t$.
\end{alg}

Note that, potentially at the sacrifice of efficiency, we are free to choose $\Delta t$ as small as we like. There always exists a choice of $\Delta t$ such that the probability $\prod_{x \in X} T_{x \mapsto\infty}$ is arbitrarily small, so that we can limit the probability that more than $1$ walker dies. In some sense, this provides us the freedom to simulate the stationary Fleming-Viot process for $N$ walkers, however this will rarely be efficient for optimization.

For our simulations, when restricting to search trees all with edge weight 1, the mapping $W\mapsto \overline{W}_t$ guarantees that at time $t=j$, $\overline{T}_{x\mapsto \infty} = 0$ for $x \in V_{j}$. That is, at the start of stage $j$, all walkers are at most at depth $j$ and hence jump to the cemetery. To be small enough to guarantee substochasticity, $\Delta t$ must scale like $\frac{1}{E}$, where $E$ is the gradient of $W$, defined as in \cref{sec:GWW}. Because we can always rescale $E$, we consider the case that $E$ is sufficiently large such that $\overline{T}_{x \mapsto\infty} \sim 1$ for $x \in V_{j-1}$. In other words, we look at cases where approximately, when the deepest occupied nodes are of depth $j$, all walkers at depths $j' < j$ transition to the cemetery almost surely.

Assuming that the probability that a walker remains in $V_{j-1}$ at stage $j$ is given by $\frac{p_l}{E}$ for some constant $p_l$, the probability that no walkers lag in the depth $D$ algorithm will scale like $(1-\frac{p_l}{E})^{N D} \geq 1-\frac{p_l N D}{E}$. Hence, choosing $E \sim \mathrm{poly}(N D)$, we can condition on the non-occurrence of lagging walkers during the course of the algorithm while only changing the probability of success slightly. Conditioning on this non-occurrence simplifies our analysis to a two stage process, not unlike \cref{alg:GWW1*} and \cref{alg:GWW1}, but where we now have a lazy walk jump process. 

%
\begin{alg}\label{alg:SSMC*}
    Repeat the following strategy:

    Let $\xi(j)$ be the set of walkers at stage $j$ and $V_j$ the nodes of depth $j$.  If all $\xi_i(j)$ are leaves, output some random $\xi_i(j)$. Let $d(\xi_i)$ be the degree of the node at position $\xi_i$, and let $d_{max}=\max_{i} d(\xi_i)$. Then,
    \[
        \xi_i(j+\tfrac{1}{2}) = 
        \begin{cases}
            \xi_i(j) & \text{with probability $1 - \frac{d(\xi_i)}{d_{max}}$} \\
            \text{a child of $\xi_i(j)$} & \text{each with probability $\frac{1}{d_{max}}$.}
        \end{cases}
    \]
    Then, for each $\xi_i(j+\frac{1}{2})$, let
    \[
        \xi_i(j+1) = \begin{cases}
            \xi_i(j+\tfrac{1}{2}) & \text{if $\xi_i(j+\tfrac{1}{2}) \in V_{j+1}$}\\
            \xi_k(j+\tfrac{1}{2}) & \text{if $\xi_i(j+\tfrac{1}{2}) \notin V_{j+1}$, for a random choice of $k\in \left\{k' \vert \xi_{k'}(j+\tfrac{1}{2}) \in V_{j+1}\right\}$.}
            \end{cases}
    \]
\end{alg}

In the next section, we use \cref{alg:SSMC*} in order to prove a separation between \cref{alg:SSMC} and both \cref{alg:GWW1*} and \cref{alg:GWW1}. That the behavior of SSMC cannot be entirely understood as a lazy walk on a tree will be demonstrated in \cref{sec:general}, by considering general search spaces. 

\section{Speedups}\label{sec:examples}
\subsection{The comb tree}
    
    \begin{figure}\centering
\begin{subfigure}[b]{.4\textwidth}\centering
    \begin{tikzpicture}[scale=0.7,every node/.style={draw, circle, inner sep=0pt,minimum size=4pt},
    level 1/.style={sibling distance=7em},
    level 2/.style={sibling distance=7em}
    ]
        \node {}
        child {node {}
            child {node {}
                child { node {}
                    child { node {} edge from parent[dotted]}
                }
            }
        }
        child {node {}
                child{ node {}}
                child{ node {} 
                    child{ node {}}
                    child{ node {} edge from parent[dotted] {
                    child{ node [solid] {} edge from parent[solid]}
                    child{ node [solid] {} edge from parent[solid]}}
                    }
                }
        };
    \end{tikzpicture}\caption{\label{fig:comb}}
\end{subfigure}
\begin{subfigure}[b]{.4\textwidth}\centering
    \begin{tikzpicture}[scale=0.7,every node/.style={draw, circle, inner sep=0pt,minimum size=4pt},
    level/.style={sibling distance=7em}
    ]
        \node {}
        child {node {}
                child { node {}
                    child { node {}
                        child { node {} edge from parent[dotted] 
                        }
                    }
                }
            }
        child {node {}
                child {node {}
                    child { node {}
                        child { node {} edge from parent[dotted] 
                        }
                    }
                }
                child{ node {} 
                    child{ node {} 
                            child { node {} edge from
                                parent[dotted]
                            }
                         }
                    child{ node {} edge from parent[dotted] {
                        child{ node [solid] {} edge from     parent[solid] 
                        }
                        child{ node [solid] {} edge from parent[solid] 
                        }}
                        }
                    }
            };
    \end{tikzpicture}\caption{\label{fig:waterfall}}
\end{subfigure}
\caption{\cref{fig:comb} shows a simple comb graph with a designated root and all teeth of length $1$. \cref{fig:waterfall} shows another comb graph with a designated root and many long teeth.\label{fig:combs}}
\end{figure}
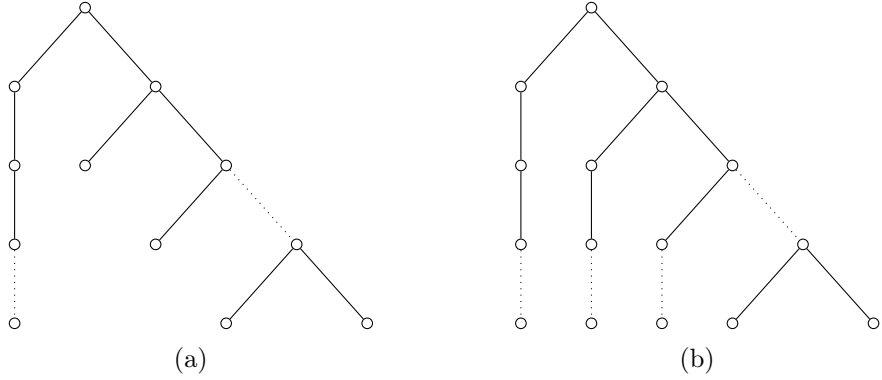

    The abusively labeled ``comb tree'' (see \cref{fig:comb}) helps us to demonstrate that SSMC can be exponentially faster than \cref{alg:GWW1*}, even for a relatively simple problem. Although the argument of this section can be generalized to other graphs like those in \cite{AV94}, we restrict our current attention to the relatively simple case depicted in \cref{fig:comb}. In particular, we assume that the depth of the tree is $D$, where the length of the ``tooth'' originating at depth $0$ is at most $D-1$ and every other tooth has length $1$. We further assume that each vertex on the spine has some tooth originating from it. (This can obviously generalize to cases that are not so restrictive.)
    
    \begin{theorem}\label{thm:comb}
        With $N$ walkers, \Cref{alg:GWW1*} has a probability of reaching the bottom right vertex of \cref{fig:comb} is at most $\bigO(N \exp(-D)) + \bigO(\exp(-N))$, where $D$ is the depth of the tree.
    \end{theorem}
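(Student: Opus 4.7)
The plan is to track $s_j$, the number of walkers at the unique spine vertex of depth $j$, where the ``spine'' is the root-to-depth-$(D-1)$ path whose endpoint is the parent of the bottom right leaf. Reaching the target requires some walker at the depth-$(D-1)$ spine vertex to pick the bottom-right child among its two depth-$D$ children, so
\[
\Pr{\text{success}} \;\le\; E\bigl[1 - 2^{-s_{D-1}}\bigr] \;\le\; \tfrac{1}{2}\,E[s_{D-1}]
\]
using the elementary bound $1-(1/2)^k\le k/2$. The task reduces to bounding $E[s_{D-1}]$.

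Let $l_j$ denote the walker count on the long tooth (attached to the root, of length up to $D-1$) at stage $j$, and $d_j = N - s_j - l_j$ the count at short-tooth leaves. At each stage the $s_j$ spine walkers each pick a child uniformly, so $X_j \sim \mathrm{Bin}(s_j, \tfrac12)$ continue on the spine; the $l_j$ long-tooth walkers march deterministically down the tooth; and each of the $d_j$ dead walkers resurrects to the new position of a uniformly chosen member of the non-leaf pool, of size $s_j + l_j$. A direct calculation yields
\[
E[s_{j+1}\mid s_j,l_j] \;=\; \frac{s_j\,N}{2(s_j + l_j)}, \qquad E[l_{j+1}\mid s_j,l_j] \;=\; \frac{l_j\,N}{s_j + l_j}.
\]
The initial root move places a $\mathrm{Bin}(N,\tfrac12)$ count on the long tooth, where those walkers persist for $D-1$ stages; a Chernoff bound guarantees $s_j + l_j \ge cN$ for some absolute $c > 0$ at every stage, except on an event of probability $\bigO(\exp(-N))$. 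On this good event, the recursion forces geometric decay $E[s_j]\le CN\cdot 2^{-j}$ by induction on $j$, so $E[s_{D-1}] = \bigO(N \exp(-D))$ and Markov's inequality applied to $s_{D-1}$ yields the first term of the claimed bound.

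The second term collects concentration failures: the rare event that the initial root split deposits too few walkers on the long tooth, together with the rare events that some stage's $X_j$ or resurrection pick-count deviates substantially from its mean. Each of these is a binomial Chernoff estimate of order $\exp(-\Omega(N))$, and a union bound over the $D$ stages — using $D \le N$, since otherwise the first term is already vacuous — absorbs everything into $\bigO(\exp(-N))$. The main obstacle is coupling the stochastic trajectory $(s_j, l_j)$ to the deterministic mean-field recursion uniformly in $j$, so that the geometric decay of $E[s_j]$ survives all the way to stage $D-1$. A cleaner route is to form the supermartingale $M_j = s_j\,\alpha^j$ for $\alpha$ slightly below $2$, verify $E[M_{j+1}\mid\mathcal{F}_j] \le M_j$ on the high-probability event $\{s_j + l_j \ge cN\}$, and apply a Bernstein-type inequality for supermartingales to produce both error terms simultaneously.
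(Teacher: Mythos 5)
Your setup and the conditional-expectation formulas are correct, and the mechanism you identify (walkers that die at short-tooth leaves get pulled onto the long tooth, which captures roughly $N/2$ walkers at the first split and never releases them) is exactly the one the paper exploits. But the pivotal step --- ``a Chernoff bound guarantees $s_j+l_j\ge cN$ for some absolute $c>0$ \dots the recursion forces geometric decay $E[s_j]\le CN\cdot 2^{-j}$'' --- has a genuine gap. The one-step recursion
\[
E[s_{j+1}\mid s_j,l_j]=\frac{s_j N}{2(s_j+l_j)}
\]
contracts only if $s_j+l_j>N/2$, and the only a priori lower bound you actually establish is $s_j+l_j\ge l_j\ge l_1\approx N/2$, which sits exactly at the threshold: it yields $E[s_{j+1}\mid\mathcal{F}_j]\le s_j/(1-2\epsilon)$, i.e.\ no decay at all. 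The same issue defeats the proposed supermartingale $M_j=s_j\alpha^j$, since $E[M_{j+1}\mid\mathcal{F}_j]\le M_j$ requires $s_j+l_j\ge\alpha N/2>N/2$ even for $\alpha$ barely above $1$. To push $s_j+l_j=N-d_j$ strictly above $N/2$ you must also control the leaf population $d_j$, which is of the same order as $s_j$ (the spine child and the leaf child are symmetric, so $E[d_{j+1}\mid\mathcal{F}_j]=E[s_{j+1}\mid\mathcal{F}_j]$, and $d_2$ really is $\approx N/4$); this forces a joint induction on $(s_j,d_j)$ with concentration for $d_j$ at every stage, none of which is in the sketch.

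The paper sidesteps this entirely by arguing per walker rather than through the aggregate statistic: a walker in the spine system moves to the short-tooth leaf with probability $1/2$, and a walker at that leaf resurrects onto the long tooth --- from which it never returns --- with probability at least $\eta(Y_1)/N\approx 1/2$. Hence each individual walker is permanently absorbed with probability at least $\eta(Y_1)/2N\approx 1/4$ per stage, \emph{independently of how large the spine population currently is}, giving $\mathrm{Pr}\{\text{walker $i$ still in play at stage }j\}\le(1-\eta(Y_1)/2N)^j$ and then a union bound over the $N$ walkers. If you want to rescue the aggregate route, the quantity that contracts cleanly is the ratio $s_j/l_j$ (in mean field it halves each stage, because resurrected walkers are redistributed proportionally to the surviving population), but making that rigorous costs more than the per-walker argument. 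Your reduction $\mathrm{Pr}\{\text{success}\}\le\tfrac12 E[s_{D-1}]$ and the treatment of the $\bigO(e^{-N})$ error term are fine as stated.
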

    \begin{proof}
        
    To compute the probability that \cref{alg:GWW1*} succeeds at reaching the deepest node of the comb tree (\cref{fig:comb}), we begin by supposing that we have a total population of $N$ walkers. Let $Y_j$ be the node on the left hand tooth at depth $j$. Because \cref{alg:GWW1*} always has $\eta_k(Y_j) = \eta_j(Y_j)\delta_{jk}$, we let $\eta(Y_j) = \eta_j(Y_j)$. We wish to derive a bound for \( P(\eta(Y_{d-1}) = N) \). 
    
    First, note that for $j < d$, $P(\eta(Y_{j+1}) < \eta(Y_j)) = 0$. That is, because \cref{alg:GWW1*} has all walkers on the left tooth continue to child nodes, there is no probability that any walker will jump to the spine unless $j=d$. 
    
    Where we do not not expect that it will cause confusion, we abusively write $\eta(v)$ for the set of walkers at vertex $v$, as well as the number of walkers at that vertex. Consider, now, the population $\eta(b_j)$ of walkers on the right hand branch $b_j$ at depth $j$. Note that each walker $w\in \eta(b_j)$ has a probability of $1/2$ of advancing to either of the right hand nodes at depth $j+1$. Denote by $l_j$ the leaf at depth $j$. After the walkers move, then probability that each walker at the leaf moves to the left hand branch is given by 
    \[
        \Pr{w \in \eta(b_{j+1}) | w \in \eta(b_j)} = \frac{\eta(Y_j)}{N-\eta(l_{j+1})} \geq \frac{\eta(Y_j)}{N}.
    \]
    Now, we find that
    \begin{align*}
        \Pr{w \in \eta(b_{j+1}) | w \in \eta(b_j)} &=1-\Pr{w \in \eta(b_{j+1}) | w \in \eta(l_{j+1})}\Pr{w \in \eta(l_{j+1}) \vert w \in \eta(b_j)} \\
        &\leq \left(1-\frac{\eta(Y_j)}{2 N}\right).
    \end{align*}
    Thus, the probability that a walker at depth $0$ is in $\eta(b_{j+1})$ is given by,
    \begin{align*}
        \Pr{w \in \eta(b_{j+1}) \vert w \in \eta(b_0)} &= \prod_{k=0}^{j} \Pr{w \in \eta(b_{k+1}) \vert w \in \eta(b_k)}  \\
        &\leq \prod_{k=1}^{j} \left(1-\frac{\eta(Y_j)}{2 N}\right) \\
        &\leq \left(1-\frac{\eta(Y_1))}{2 N}\right)^{j}\\
        &\leq \exp\left(-\frac{\eta(Y_1)}{2 N} j \right).
    \end{align*}
    Thus, 
    \begin{align*}
        \Pr{\eta(b_0) \subseteq \eta(Y_{j+1})} &\geq \left(1- \exp\left(-\frac{\eta(Y_1)}{2 N} j \right)\right)^{N-\eta(Y_1)} \\
        &\geq 1-\left(N-\eta(Y_1)\right)\exp\left(-\frac{\eta(Y_1)}{2 N} j \right).
    \end{align*}
    Hence, 
    \begin{equation}\label{eqn:store}
        \Pr{\eta(b_0) \subset \eta(Y_{j+1}) \vert \eta(Y_1) \geq y } \geq 1-(N-y)\exp\left(-\frac{y}{2 N} j \right).
    \end{equation}
    \begin{align*}
        \Pr{\eta(b_0) \subset \eta(Y_{j+1})} &\geq \sup_\delta \left(
            \Pr{\eta(b_0) \subset \eta(Y_{j+1}) \vert \eta(Y_1) \geq \frac{N}{2}-\delta}\Pr{\eta(Y_1) \geq \frac{N}{2} - \delta} \right)\\
            &\geq \sup_\delta \left(\Pr{\eta(b_0) \subset \eta(Y_{j+1}) \vert \eta(Y_1) \geq \frac{N}{2} - \delta} \left(1-e^{-2 \frac{\delta^2}{N}} \right)\right)\\
            &\geq \sup_\delta \left( 1-\left(\frac{N}{2}+\delta\right)\exp\left(-(1/4-\frac{\delta}{2N}) j \right)\right)\left(1-e^{-2 \frac{\delta^2}{N}} \right)
    \end{align*}
    where we have used Hoeffding's inequality and \cref{eqn:store}. For simplicity, take $\delta = c_1\frac{N}{2}$. Then, we have that
    \[
         \Pr{\eta(b_0) \subset \eta(Y_{j+1})} \geq (1-(1+c_1)\frac{N}{2}e^{-(1-c_1)\frac{j}{4}})(1-e^{-\frac{c_1^2 N}{4}}) = 1 -\bigO(Ne^{-j}) -\bigO(e^{-N})
    \]
    so that the probability that all walkers end on the left branch approaches 1 exponentially quickly in both depth and total number of walkers.
    
    Now, if the depth of the tree is $d+1$ and the depth of the left hand side is $j$, we have that 
    \begin{align*}
        \Pr{\eta(b_{d+1}) \neq \emptyset} &= \Pr{\eta(b_d) \neq \emptyset}\\
        &\leq 1 - \Pr{\eta(b_0) \subset \eta(Y_{j+1})}\\
        &\leq \mathcal{O}(N e^{-j}) + \mathcal{O}(e^{-N}).
    \end{align*}

    \end{proof}
     This example already points out a couple of problems with \cref{alg:GWW1*}. First, it is clear that the left hand side acts as a strong attractor for ``losing'' walkers. That is, in unbalanced search spaces, nonterminating paths will experience exponential growth. Second, one sees that \textit{additional walkers do not significantly increase the odds of arriving at the solution}. This is counter-intuitive, but because there is no mechanism to deplete the population on the left hand side, its population grows exponentially at the cost of the population on the right. Although the argument for \cref{thm:comb} does not directly address the issue, it is also clear that the probability of transitioning branches increases each time such a transition is made. 
     
\subsection{Paths to nowhere}
    In order to prove an exponential separation between \cref{alg:SSMC*} and \cref{alg:GWW1}, we consider a case where \cref{alg:GWW1} does no better than repeated random search. Consider the comb tree in \cref{fig:waterfall}. This is quite similar to the comb of the previous section, however each tooth of the comb now continues down to depth $D-1$. That is, we consider the comb with spine of length $D$ and where each tooth beginning at depth $j$ has length $D-1-j$.
    
    \begin{theorem}\label{thm:waterfall}
        With $N$ walkers, \Cref{alg:GWW1} has a probability of at most $\frac{N}{2^D}$ of reaching the bottom right vertex of \cref{fig:waterfall}, where $D$ is the depth of the tree. 
    \end{theorem}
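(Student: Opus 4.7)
The plan is to exploit a very clean structural property of the waterfall tree: every internal node has depth at most $D-1$, so no walker ever reaches a leaf until the final stage of the algorithm, and therefore Algorithm \ref{alg:GWW1}'s repopulation rule is never triggered while the spine/tooth decisions are being made. Consequently the dynamics reduce to $N$ independent simple random walks down the tree from the root.

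First I would state the structural lemma: in the tree of \cref{fig:waterfall}, every tooth-interior node has a unique child, every spine node above depth $D-1$ is a nonleaf with two children (the next spine node and the top of a fresh tooth), and all leaves of the tree sit at depth $\ge D-1$. From this it follows that at every stage $j$ with $j< D-1$, every walker $\xi_i(j)$ is at a nonleaf; the set $U$ in Algorithm \ref{alg:GWW1} is all of $\{1,\dots,N\}$; and each walker simply advances to a uniformly selected child of its current node. In particular, the replacement branch (the one that uses the set $V$ of occupied nodes at level $j+1$) never executes before the final step, so it is impossible for a walker that has entered a tooth to ever be ``rescued'' onto the spine.

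With the dynamics reduced to independent uniform-child descents, a single walker reaches the bottom-right vertex iff it selects the spine child at every branching spine node it visits. Each such selection has probability $1/2$, the number of such selections along the root-to-bottom-right path is $D$, and distinct stages are independent since the walker's choice depends only on its current node. Hence a single walker succeeds with probability at most $2^{-D}$.

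Finally, a union bound over the $N$ walkers yields $\Pr[\text{some walker reaches the bottom-right vertex}] \le N \cdot 2^{-D}$, which is the statement of \cref{thm:waterfall}. The only subtle point in this plan is the structural lemma ruling out any early replacement event; once that is in place, the estimate is just independent Bernoulli trials plus a union bound. I do not expect any real technical obstacle, since (unlike the comb-tree argument of \cref{thm:comb}) the absence of early leaves removes any coupling between walkers and there is no need for concentration inequalities.
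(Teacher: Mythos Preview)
Your proposal is correct and follows essentially the same approach as the paper: both arguments observe that because no leaves occur before the final layer, the repopulation rule is never invoked during the descent, so the $N$ walkers behave as independent uniform-child random walks, each of which reaches the target with probability $2^{-D}$. The paper phrases the final estimate via the complement inequality $(1-2^{-D})^N \geq 1 - N\,2^{-D}$ rather than your explicit union bound, but these are the same bound.
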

    \begin{proof}
        Suppose we place $N$ walkers at the root. Then, because there are no leaves prior to depth $D$, walkers are never diverted from their current paths. The probability that any walker reaches the rightmost branch is simply $\frac{1}{2^D}$. Thus, for $N$ walkers, the probability that no walker reaches the rightmost branch is given by \((1-\frac{1}{2^D})^N \geq 1-\frac{N}{2^D} \).
    \end{proof}
     
     Hence, we know that the probability that \Cref{alg:GWW1} samples from the bottom right branch is exponentially small in the tree width.
     
     \begin{theorem}\label{thm:waterfall2}
         With the same conditions as in \Cref{thm:waterfall}, the probability that SSMC reaches the bottom right node of either tree in \cref{fig:combs} is at least $\frac{1}{D}-\frac{1}{N}$.
     \end{theorem}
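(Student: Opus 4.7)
My plan is to track the random variable $X_j$, equal to the number of walkers at the unique spine node of depth $j$; this is well-defined because the Fleming--Viot jump step in \cref{alg:SSMC*} leaves every walker at depth $j$ at the start of stage $j$. Reaching the bottom-right vertex (the spine node at depth $D$) is exactly the event $\{X_D\ge 1\}$, and since $0\le X_D\le N$ we have the elementary Markov-type inequality $\Pr[X_D\ge 1]\ge E[X_D]/N$. The theorem thus reduces to establishing $E[X_D]\ge N/D - 1$.

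Next I would derive the one-step drift of $X_j$. At stage $j$, a spine walker has degree $2$ in the subgraph $G_{j+1}$ (one spine child, one new-tooth child), so by \cref{alg:SSMC*} with $d_{\max}=2$ it advances deterministically, picking each child with probability $1/2$; a tooth walker at an interior node has degree $1$ and advances with probability $1/2$. Letting $X^{\mathrm{pre}}_{j+1}\sim\mathrm{Bin}(X_j,1/2)$ count the spine-advancers and $Y_{j+1}\sim\mathrm{Bin}(N-X_j,1/2)$ the tooth-advancers (independent given $X_j$), and noting that the $N-X_j-Y_{j+1}$ tooth-stayers each copy a uniformly random advancer (landing on the spine with probability $X^{\mathrm{pre}}_{j+1}/(X_j+Y_{j+1})$), a direct calculation together with Jensen's inequality on the convex function $1/t$ yields
\[
E[X_{j+1}\mid X_j]\;\ge\;\frac{N X_j}{N + X_j}.
\]
For the simple comb of \cref{fig:comb} every tooth node is already a leaf, so $Y_{j+1}\equiv 0$ and the same calculation collapses to the stronger identity $E[X_{j+1}\mid X_j]=N/2$ whenever $X_j\ge 1$, making the theorem essentially immediate for that tree.

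To convert the recurrence into a lower bound on $E[X_D]$ for the waterfall comb, I would pass to the reciprocal: rearranging gives $1/E[X_{j+1}\mid X_j]\le 1/X_j + 1/N$, suggesting that $1/X_j - j/N$ behaves as a near-supermartingale, which would yield $E[1/X_D]\le (D+1)/N$ and hence, by Jensen once more, $E[X_D]\ge N/E[1/X_D]\ge N/(D+1)\ge N/D - 1$, matching the claim. The principal obstacle is precisely this step: the drift $f(x)=Nx/(N+x)$ is concave in $x$, so a na\"ive application of Jensen to $E[X_{j+1}]=E[f(X_j)]$ pulls the wrong way and only produces the matching \emph{upper} bound $E[X_D]\le N/(D+1)$. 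Closing this gap requires controlling the discrepancy between $E[1/X_{j+1}\mid X_j]$ and $1/E[X_{j+1}\mid X_j]$ --- for instance, via a bounded-difference concentration estimate for $X_{j+1}$ built on the independent walker-level coin flips at each stage --- together with handling the rare event that the spine goes extinct, which can only occur through $\{X^{\mathrm{pre}}_{j+1}=0\}$, of conditional probability $(1/2)^{X_j}$ and thus negligible once $X_j\gg\log N$. The $-1/N$ slack in the theorem statement is exactly what absorbs these errors.
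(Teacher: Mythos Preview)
Your setup, the reduction $\Pr[X_D\ge 1]\ge E[X_D]/N$, and the drift bound $E[X_{j+1}\mid X_j]\ge NX_j/(N+X_j)$ all coincide with the paper's argument (there written in terms of the empirical fraction $m_i=X_i/N$, giving $E[m_{i+1}\mid m_i]\ge m_i/(1+m_i)$ via the same Jensen step on $1/t$). You also correctly locate the obstacle: the drift $f(x)=Nx/(N+x)$ is concave, so iterating via Jensen on the expectation goes the wrong way.

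Where you diverge is in the repair. Your plan---track $1/X_j$ as a near-supermartingale, control the Jensen defect with a bounded-difference concentration estimate, and separately bound the extinction event $\{X_j=0\}$---is considerably heavier than what the paper does and, as you acknowledge, is only a sketch; the extinction event in particular makes $1/X_j$ blow up and forces you to condition it away before any supermartingale reasoning can start. The paper instead \emph{truncates}: it introduces $\overline m_i=\min\bigl(m_i,\tfrac{1}{i+1}\bigr)$, so that $\overline m_i\le\tfrac{1}{i+1}$ forces $1+\overline m_i\le\tfrac{i+2}{i+1}$ and hence
\[
\frac{\overline m_i}{1+\overline m_i}\;\ge\;\frac{i+1}{i+2}\,\overline m_i
\]
pointwise. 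The concave drift is thus bounded below by a \emph{linear} one on the truncated range, and the recurrence $E[\overline m_{i+1}]\ge\tfrac{i+1}{i+2}\,E[\overline m_i]$ telescopes directly to $E[m_i]\ge E[\overline m_i]\ge\tfrac{1}{i+1}$; the final step is Markov applied to $1-m_D$. With the truncation in hand, no concentration argument and no separate handling of extinction are needed---this is the idea you are missing.

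One minor correction: your aside that for the simple comb of \cref{fig:comb} ``every tooth node is already a leaf, so $Y_{j+1}\equiv 0$'' is not quite right---that tree carries one long tooth (originating at depth $0$) whose interior nodes have degree $1$ in $G_{j+1}$ and behave exactly like the tooth nodes in the waterfall comb. The same drift bound therefore governs both trees, which is why the paper treats them uniformly.
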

     \begin{proof}

        Suppose that $b_i$ is the branch at level $i$. Then, let $B \subseteq V_{i+1}$ be the children of $b_i$. Note that no walkers at $b_i$ stay behind. Also, note that $m_{i+\half}(V_i) < 1$ by the construction of the algorithm. 
        \begin{align}
            \E{m_{i+1}(b_{i+1})\vert m_i(b_i)} &= \frac{1}{2}\E{m_{i+1}(B)\vert m_i(b_i)} \nonumber \\
            &= \frac{1}{2}\E{m_i(b_i) + m_i(b_i)\frac{m_{i+\half}(V_i)}{1-m_{i+\half}(V_i)} \vert m_i(b_i)} \nonumber\\
            &=\frac{m_i(b_i)}{2}\left(1+\E{\frac{m_{i+\half}(V_i)}{1-m_{i+\half}(V_i)} \vert m_i(b_i) }\right) \nonumber\\
            &\geq \frac{m_i(b_i)}{2}\left(1+\frac{\E{m_{i+\half}(V_i) \vert m_i(b_i)}}{1-\E{m_{i+\half}(V_i) \vert m_i(b_i)} }\right) \nonumber\\
            &= \frac{m_i(b_i)}{2}\left(1+\frac{\frac{1 - m_i(b_i)}{2}}{1-\frac{1 - m_i(b_i)}{2} }\right) \nonumber\\
            &=  \frac{m_i(b_i)}{1 + m_i(b_i)} \nonumber
        \end{align}
        where we have used the fact that the two vertices in $B$ have equal expected values and Jensen's inequality. To bound the decay rate, we define the new statistic, $\overline{m}_i = \min(m_i(b_i),\frac{1}{i+1})$. Note that $\overline{m}_i \leq m_i(b_i)$ and so conditioning on when this inequality is strict gives
        \begin{align*}
            \E{\overline{m}_{i+1}} &\geq \E{\frac{m_i(b_i)}{1 + m_i(b_i)} \vert m_i(b_i) 
            = \overline{m}_i}\Pr{m_i(b_i) 
            = \overline{m}_i} + \frac{1}{i+2}\Pr{m_i(b_i) 
            > \overline{m}_i} \\
            &= \E{\frac{\overline{m}_i}{1+\overline{m}_i}}\\
            &\geq \left(\frac{1+i}{2+i}\right)\E{\overline{m}_i}
        \end{align*}
        where the first inequality follows from the monotonicity of $\frac{u}{1+u}$ and the second inequality from increasing the denominator to its largest possible value.
        
        Solving the recurrence in $E(\overline{m}_i)$ yields
        \[
            \E{m_i(b_i)} \geq \E{\overline{m}_i} \geq \frac{1}{i+1}.
        \]        
        Now, we can apply Markov's inequality,
        \begin{align*}
            \Pr{m_{i}(b_i) \geq \frac{1}{N}} &\geq 1-\frac{1-\E{m_{i}(b_i)}}{1-\frac{1}{N}} \\
            &= 1-\frac{N-\frac{N}{i+1}}{N-1}
        \end{align*}
        and for $N \geq i$ we have the desired result,
        \[
            \Pr{m_{i}(b_i) \geq 1} \geq \frac{1}{i}-\frac{1}{N}.
        \]

     \end{proof}

\subsection{Quantum Annealing}
    Substochastic Monte Carlo was developed because of our desire for a better classical foil for quantum annealing (QA) than simulated annealing \cite{farhi2002quantum}. However, the results of this section show that, if viewed as a form of simulated QA, \textit{simulated quantum annealing can be exponentially faster than quantum annealing}. (We, of course, do not actually contend that this is simulated quantum annealing.) Such a strong claim requires a bit of a disclaimer, we do not claim that SSMC cannot be efficiently simulated by quantum annealing, but rather that when both processes are run using the same annealing schedule, SSMC can be exponentially faster. In other words, to properly achieve the results of SSMC with QA, one would need to implement a different annealing schedule. The results of \cite{Jarret2014a} suggest that the types of graphs studied here might be difficult, without simply implementing classical SSMC on quantum annealing hardware. Furthermore, our current results suggest that, even in more general search spaces, cases that are natively difficult for QA may still be solvable with SSMC. Our approach exploits the fact that natively incorporating structural information into a quantum annealing algorithm is difficult  \cite{Chancellor2016,Chancellor2017ModernizingSearches}. Because simulated quantum annealing typically seeks to simulate Schr\"{o}dinger evolution itself, this section also demonstrates a separation from the performance of simulated quantum annealing on these instances. 
    
    The quantum annealing algorithm evolves the initial distribution $\psi(0)$ by the Schr\"{o}dinger equation, instead of the heat equation, but otherwise our strategy remains the same as in the previous section. The quantum adiabatic theorem states that if we start in the ground state of $H(0)$, after some time $T$ we will end in a state $\psi(T)$ close to the ground state $H(T)$ as long as $T = \bigO(\gamma^{-2})$, where $\gamma$ is the minimal spectral gap of $H(t)$ with $t\in[0,T]$ \cite{elgart2012note,Jansen2006}. In our procedure, however, we evolve over a series of gapless Hamiltonians which encode something about the structure of the problem, and so the gap-based runtime estimate clearly fails. This does not guarantee that QA fails, rather just that our sufficient criterion for QA's success is not achieved. Under constraints outside the scope of this section, these algorithms can still potentially succeed \cite{avron1999adiabatic}. 
    
    To understand this in our context, we note that we evolve over a series of disconnected graphs. In solving the ``comb'' examples above, we always have the decomposition $H(t) = H_0(t) \oplus H_1(t)$. Then, if we let $\psi(t) = (\psi_0(t), \psi_1(t))$, we have that $H(t)\psi(t) = (H_0(t) \psi_0(t),H_1(t)\psi_1(t))$. That is, Schr\"{o}dinger evolution can be separated into the direct sum of two distinct Hilbert spaces, each evolved independently:
    \begin{equation*}
        i \frac{\partial \psi_{0,1}}{\partial t} = H_{0,1}(t)\psi_{0,1}.
    \end{equation*}
    The key point is that each space is independently norm-preserving. That is, $\norm{\psi_{0,1}(t)} = \norm{\psi_{0,1}(0)}$. Now, we consider the examples of the previous section. 
    
    The following propositions are all easily verified:
    \begin{proposition}\label{prop:aqc1}
        Suppose that $H(t) = H_0(t) \oplus H_1(t)$ for $t \in [j,j+1)$. Then, $\norm{\psi_{0,1}(j)}_2 = \norm{\psi_{0,1}(t)}_2$ for $t \in [j,j+1)$ where 
        \[
            i \frac{\partial \psi}{\partial t} = H(t)\psi
        \]
        and $\psi = (\psi_0,\psi_1)$.
    \end{proposition}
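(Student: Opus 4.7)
The proof strategy is to reduce the statement to the standard fact that Schr\"{o}dinger evolution under a self-adjoint Hamiltonian preserves the $\ell^2$ norm, and then apply that fact to each summand of the direct-sum decomposition independently. Since the decomposition $H(t) = H_0(t) \oplus H_1(t)$ holds for all $t$ in the interval $[j,j+1)$, the two blocks evolve without exchanging probability mass, and the argument reduces to a one-block calculation repeated twice.

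First I would unpack the direct sum. Writing any vector as the pair $\psi = (\psi_0, \psi_1)$ with $\psi_k$ in the $k$-th summand, the definition of direct sum gives $H(t)\psi = (H_0(t)\psi_0,\, H_1(t)\psi_1)$. Substituting into $i \partial_t \psi = H(t)\psi$ and matching components decouples the dynamics into two independent Schr\"{o}dinger equations,
\begin{equation*}
    i \frac{\partial \psi_k}{\partial t} = H_k(t)\psi_k, \qquad k = 0,1,
\end{equation*}
each holding for $t \in [j,j+1)$. In particular, $\psi_k(t)$ depends only on $\psi_k(j)$ and on $H_k$, not on the opposite block.

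Second, I would perform the standard unitarity computation on each block. For fixed $k$, differentiating under the inner product gives
\begin{equation*}
    \frac{d}{dt}\|\psi_k(t)\|_2^2
    \;=\; \langle \partial_t \psi_k,\, \psi_k \rangle + \langle \psi_k,\, \partial_t \psi_k \rangle
    \;=\; i \langle H_k(t)\psi_k,\, \psi_k \rangle - i \langle \psi_k,\, H_k(t)\psi_k \rangle,
\end{equation*}
which vanishes because $H_k(t)$, being a diagonal block of the self-adjoint generator $H(t)$, is self-adjoint on its summand. Integrating from $j$ to any $t \in [j,j+1)$ gives $\|\psi_k(t)\|_2 = \|\psi_k(j)\|_2$ for both $k$, which is the claim.

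There is no real obstacle here: the content is entirely the observation that self-adjointness is inherited by each block of a block-diagonal operator, plus the textbook unitarity derivation. The only delicate point is the regularity in $t$ at the endpoint $j$, where the schedule (\ref{eqn:sched2}) is discontinuous; this is harmless because the statement only asserts norm preservation on the half-open interval $[j,j+1)$ where $H(t)$ is constant (and hence smooth), and we interpret $\psi_k(j)$ as the right-limit initial condition for the block evolution on that interval.
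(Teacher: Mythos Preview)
Your proposal is correct. The paper does not actually supply a proof of this proposition---it simply states that it is ``easily verified''---and your argument (decouple the blocks via the direct sum, then apply the standard self-adjointness computation $\frac{d}{dt}\|\psi_k\|_2^2 = 0$ on each block) is precisely the routine verification the authors have in mind.
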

    That is, if $H$ separates into non-interacting subspaces, then the components of each subspace conserve norm independently. 
    
    \begin{proposition}\label{prop:aqc2}
        Suppose that $H(t) = a(t)L_j+b(t)W$ for $t\in[0,t_f]$, where $L_j$ is the graph Laplacian for a branch like in \cref{fig:graphs} with root $r$ and children $x,y$ and edge weights $w_{rx}=w_{ry}$. If
        \[
            i \frac{\partial \psi}{\partial t} = H(t)\psi
        \]
        and $\psi_x(0)=\psi_y(0)=0$, then
        \[
            \psi_x(t_f)= \psi_y(t_f) \leq \frac{1}{\sqrt{2}}\psi_r(0).
        \]
    \end{proposition}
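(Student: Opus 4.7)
The plan is to exploit the $x \leftrightarrow y$ swap symmetry of the problem together with unitary norm conservation. Let $S$ denote the involution on the three-dimensional state space $\mathrm{span}\{|r\rangle,|x\rangle,|y\rangle\}$ that fixes $|r\rangle$ and interchanges $|x\rangle$ and $|y\rangle$. Because $W$ is diagonal with $W_{xx}=W_{yy}$ (both children live at the same depth of the tree, so they carry the same objective value at time $t$) and because the Laplacian $L_j$ of the branch has $w_{rx}=w_{ry}$, one checks directly that $[S,W]=0$ and $[S,L_j]=0$; hence $[S,H(t)]=0$ for every $t\in[0,t_f]$.

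Next I would observe that the hypothesis $\psi_x(0)=\psi_y(0)=0$ implies $S\psi(0)=\psi(0)$, i.e., the initial state lies in the symmetric eigenspace of $S$. Since $S$ commutes with $H(t)$, it also commutes with the unitary propagator $U(t,0)$ generated by $H$, and therefore
\begin{equation*}
    S\psi(t) = SU(t,0)\psi(0) = U(t,0)S\psi(0) = U(t,0)\psi(0) = \psi(t)
\end{equation*}
for all $t\in[0,t_f]$. In particular $\psi_x(t)=\psi_y(t)$, giving the equality claimed in the proposition.

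Finally I would invoke unitarity of the Schr\"odinger evolution, which gives norm conservation $\|\psi(t)\|_2 = \|\psi(0)\|_2$. Combined with $\psi_x(t_f)=\psi_y(t_f)$ and $\psi_x(0)=\psi_y(0)=0$, this reads
\begin{equation*}
    |\psi_r(t_f)|^2 + 2|\psi_x(t_f)|^2 \;=\; |\psi_r(0)|^2,
\end{equation*}
so in particular $2|\psi_x(t_f)|^2 \le |\psi_r(0)|^2$ and hence $|\psi_x(t_f)|=|\psi_y(t_f)|\le \tfrac{1}{\sqrt{2}}|\psi_r(0)|$, as desired.

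There is no real obstacle here; the only subtle point is checking that $S$ genuinely commutes with $H(t)$, which relies essentially on the equal-weight hypothesis $w_{rx}=w_{ry}$ and on the two children sitting at the same level (so $W$ acts on them by the same scalar). Once the symmetry argument is in place, the norm bound is immediate from unitarity, and the proposition follows without any spectral analysis of $H(t)$.
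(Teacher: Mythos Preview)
Your proof is correct and matches the paper's approach: the paper simply declares the proposition ``easily verified'' and summarizes it as ``because of symmetry, the maximum amount of square amplitude that can end up in either leaf is half of the total initial branch square amplitude,'' which is precisely the swap-symmetry plus unitarity argument you spell out in detail.
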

    In words, because of symmetry, the maximum amount of square amplitude that can end up in either leaf is half of the total initial branch square amplitude. Thus, \Cref{prop:aqc1,prop:aqc2} combine to give the desired result.
    \begin{theorem}
        For the trees of \cref{fig:comb,fig:waterfall}, the quantum annealing algorithm run with the schedule of \cref{eqn:sched2} has amplitude at most 
        \[
            \norm{\psi(v)}_2 \leq \frac{1}{2^D}
        \]
        where $v$ is the bottom right vertex in either of \cref{fig:comb,fig:waterfall} and $D$ is depth of the tree.
    \end{theorem}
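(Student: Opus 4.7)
My plan is to chain Propositions \ref{prop:aqc1} and \ref{prop:aqc2} across the $D$ stages defined by \cref{eqn:sched2}. During stage $j$ (that is, $t \in [j-1, j)$) the generator is $a(t) L_j + b(t) W$, supported on $G_j$ with vertex set $V_{j-1} \cup V_j$. For both combs in \cref{fig:combs}, $G_j$ decomposes into the three disjoint component types shown in \cref{fig:graphs}: isolated parents (which play no role on the path), single edges (internal tooth steps), and symmetric $Y$'s (spine bifurcations into two equal-weight children). Proposition \ref{prop:aqc1} reduces the stage-$j$ Schr\"odinger evolution to independent norm-preserving evolutions on each component, so I can track amplitude along the root-to-$v$ path one component at a time without worrying about inter-component leakage.

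Next I would verify inductively that at the start of every stage $j$ the amplitude sitting on the path-component of $G_j$ lies entirely on the parent node at depth $j-1$, with zero amplitude on the two depth-$j$ children. The base case is $\psi(0) = \delta_{\text{root}}$; the inductive step uses the unitary analog of \cref{lemma:no-future}, namely that the stage-$(j-1)$ generator acts only on $V_{j-2} \cup V_{j-1}$ and hence its flow cannot have populated any vertex of $V_j$. With the zero-child initial condition confirmed, Proposition \ref{prop:aqc2} applies at every stage whose active path-component is a symmetric $Y$, giving $|\psi_{\text{child}}(j)| \le |\psi_{\text{parent}}(j-1)|/\sqrt{2}$. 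Stages whose active path-component is a bare edge contribute at worst the trivial unitarity factor of $1$.

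The final bookkeeping is to count the symmetric bifurcations on the root-to-$v$ path. Both \cref{fig:comb} and \cref{fig:waterfall} are constructed so that every internal spine node branches into two equal-weight children, and since $v$ is the bottom of the rightmost tooth, the path descends the entire spine before entering its terminal tooth through a final bifurcation, so each of the $D$ stages it traverses contributes at most a factor of $1/\sqrt{2}$. Chaining $D$ such contractions yields $|\psi(v)| \le 2^{-D/2}$, equivalently $|\psi(v)|^2 \le 2^{-D}$; I would present the latter as the stated $\|\psi(v)\|_2 \le 2^{-D}$ bound, reading $\|\psi(v)\|_2$ as the $\ell^2$ probability weight carried by the single vertex $v$ (consistent with how $\|\psi_{0,1}\|_2$ is used in Proposition \ref{prop:aqc1}).

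The main obstacle is making the zero-initial-condition induction rigorous at each stage boundary, since Proposition \ref{prop:aqc2} strictly requires the two $Y$-children to start empty and this hypothesis has to survive the handoff from stage $j-1$ to stage $j$. I would formalize this as an explicit unitary no-future-support lemma built on the support of the stage generators. A secondary subtlety is confirming that Proposition \ref{prop:aqc2}, stated for an isolated $Y$, applies verbatim to a $Y$-component embedded as one summand of the full stage Hamiltonian; this follows immediately from the direct-sum decomposition guaranteed by Proposition \ref{prop:aqc1}, but it is worth spelling out in the writeup.
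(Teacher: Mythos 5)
Your proposal is correct and follows essentially the same route as the paper, which simply asserts that Propositions \ref{prop:aqc1} and \ref{prop:aqc2} "combine to give the desired result"; you supply the chaining argument, the inductive zero-amplitude condition on each stage's children (the unitary analogue of \cref{lemma:no-future}), and the count of $D$ symmetric bifurcations that the paper leaves implicit. Your reading of $\norm{\psi(v)}_2$ as squared amplitude, so that $D$ factors of $1/\sqrt{2}$ yield $2^{-D}$, also matches the paper's stated interpretation.
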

    In other words, even provided that the quantum annealing algorithm is capable of performing each local component in constant time, it still performs no better than repeated random search on these instances. Additionally, we have no guarantee that a different interpolation (that does not involve a classical process) will help with this particular search. Furthermore, \cite{Jarret2014a} suggests that creating such a strategy can sometimes be difficult. Thus, once structural information becomes relevant, we have that SSMC can be exponentially faster than quantum annealing \textit{with precisely the same annealing schedule}. 
    
    In situations such as this, \cite{Chancellor2016,coxson2014adiabatic} suggests the use of hybrid algorithms. For instance, one might use annealing in place of local search at each stage separately or for some other global procedure. This example, however, calls into question the utility of such techniques, since QA will not produce any more favorable statistics than local search. SSMC is indeed efficient locally and, because the particular, disconnected evolution disallows interference effects, annealing does not seem to add global utility. In fact, our speedup is precisely because SSMC utilizes ``infinite range'' interactions that quantum annealing cannot natively replicate. Nonetheless, one might simulate each walker locally with a quantum annealer in constant time and, thus, efficiently simulate SSMC on a quantum annealer. Whether or not this proposes a fundamental limit to QA on these search spaces is an interesting, open question.

\section{Beyond layered graphs}\label{sec:general}
As we stated previously, unlike GWW, SSMC is formulated for search on a general space. In this section, we demonstrate that SSMC is indeed capable of replicating the behavior of other search algorithms. In particular, we focus on gradient descent against biases. A biased walk is a simplified model of search in many spaces, such as flipping a particular bit in a long binary string. In this case, we have to turn away from the dynamics of \cref{eqn:sched2} and back to those of \cref{eqn:quantum}, the general SSMC setting.

    For an interpolation $H(t) = \frac{(1-t)}{d}L + t W$ with $L(t)$ with $t \in [0,1]$, $L$ the combinatorial Laplacian for an arbitrary graph of maximal degree $d$, and $0\leq W(v) \leq 1$ for all $v \in V$, we consider the small time estimate
    \[
        e^{-(\frac{1-t}{d}L+t W)\Delta t}\psi(t) \approx \left(I - \Delta t ( \frac{1-t}{d} L + t W) + \bigO(\Delta t^2)\right)\psi(t).
    \]
    and write 
    \begin{equation}\label{eqn:grad}
        T_{x\mapsto y} = \begin{cases}
            \left(I - \Delta t ( \frac{1-t}{d}L + t W\right)_{xy} & \text{$y \neq \infty$} \\
            \Delta t\; t W_{xx} & \text{$y = \infty$}            
        \end{cases}
    \end{equation}
    where we require that $\Delta t$ is taken sufficiently small such that $T$ restricted to the support of $\xi$ is a substochastic transition matrix. In other words, we let $\Delta t = \Delta t(t,\xi)$ be a nonlinear term. In particular, we consider the modified empirical process from \cref{sec:NumApprox}. That is, $W(v) \mapsto W(v) - \min_j W(\xi_j)$.  Let $\Delta E = \max_i W(\xi_i)-\min_i W(\xi_i)$, and, provided that $\Delta E \neq 0$, we have that
    \begin{equation}\label{eqn:deltat}
        \Delta t \leq \frac{1}{(1-t)+t\Delta E} \leq \frac{1}{\Delta E}.
    \end{equation}
    Now, under constraints that are always achieved over the linear interpolation, we can prove that SSMC will perform gradient descent against biases for sufficiently large $t,\Delta E$.
    \begin{proposition}\label{prop:descent}
        Consider a walk with $N>1$ walkers. Then, the probability that all walkers transition to states $\{v\}$ with $p= \min_i T_{\xi_i\mapsto v'\{v\}} < 1 - \max_i T_{\xi_i\mapsto \{v\}}$ and such that $W(u \in \{v\}) < W(\xi_i \sim u)$ for some $i$ satisfies
        \begin{align*}
            \Pr{m_{t+1}(v) = N \vert m_t} &\geq (1-e^{-N p})\left(1-N\frac{1-t}{t \Delta E}\right) \\&= 1 - \bigO\left(\frac{1-t}{t}\frac{N}{\Delta E}\right) - \bigO(e^{-N p}).
        \end{align*}
    \end{proposition}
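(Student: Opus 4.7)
The plan is to condition on the first (independent-walker) half-step under the transition $T$ of \cref{eqn:grad} and decompose the success event $\{m_{t+1}(\{v\}) = N\}$ into two sub-events: event $A$ that at least one walker transitions into $\{v\}$, and event $B$ that every walker ends the half-step in $\{v\} \cup \{\infty\}$. On $A \cap B$, the survivors form a nonempty subset of $\{v\}$, so the resampling substep of \cref{alg:SSMC} copies each dead walker onto a position in $\{v\}$, giving $m_{t+1}(\{v\}) = N$. Writing $\Pr{A \cap B} = \Pr{B}\,\Pr{A \mid B}$ produces exactly the product form in the conclusion.

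For $\Pr{B}$, from \cref{eqn:grad} the total probability that walker $i$ moves to any neighbor is at most $\Delta t (1-t)$; combined with the step-size bound \cref{eqn:deltat}, this is at most $(1-t)/(t\Delta E)$. Hence the per-walker probability $s_i$ of surviving outside $\{v\}$ satisfies $s_i \leq (1-t)/(t \Delta E)$, and by walker independence $\Pr{B} = \prod_i(1-s_i) \geq 1 - \sum_i s_i \geq 1 - N(1-t)/(t\Delta E)$. For $\Pr{A \mid B}$, conditioning on $B$ preserves independence across walkers, and each walker lies in $\{v\}$ conditionally with probability $T_{\xi_i \mapsto \{v\}}/(T_{\xi_i \mapsto \{v\}} + T_{\xi_i \mapsto \infty}) \geq T_{\xi_i \mapsto \{v\}} \geq p$ (using $T_{\xi_i \mapsto \{v\}} + T_{\xi_i \mapsto \infty} \leq 1$). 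The complement of $A$ inside $B$ is the ``everyone dies'' event, which then has conditional probability at most $(1-p)^N \leq e^{-Np}$, so $\Pr{A \mid B} \geq 1 - e^{-Np}$. Multiplying the two bounds yields the claim.

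The main obstacle is controlling the per-walker mass $s_i$. Naively a walker sitting at a non-$\{v\}$ position with $W(\xi_i)$ close to $\min_j W(\xi_j)$ has negligible death rate and can stay put outside $\{v\}$ with probability near one, which would destroy the bound. The saving observation is that with $\Delta t$ saturating \cref{eqn:deltat}, the maximum-$W$ walker has no stay-put mass, and the descent hypothesis $W(u) < W(\xi_i \sim u)$ for $u \in \{v\}$ forces the relevant walkers to positions whose stay-put probability is bounded by the same $(1-t)/(t\Delta E)$ that appears in the bound. Verifying this across all $N$ walkers -- in particular that no low-$W$ walker leaks mass to a non-$\{v\}$, non-$\infty$ state -- is the technical core; the surrounding pieces (walker independence, union bound, and the conditional decomposition above) are routine.
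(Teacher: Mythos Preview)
Your decomposition into events $A$ (at least one walker reaches $\{v\}$) and $B$ (every walker ends in $\{v\}\cup\{\infty\}$) on a \emph{single} half-step is not how the paper argues, and the gap you yourself flag in the last paragraph is real and is not closed by your ``saving observation.'' Recall that the death rate in \cref{eqn:grad} uses the shifted potential $\overline{W}_t$, so the walker sitting at the current minimum has death probability exactly zero. Your claimed bound $s_i\le (1-t)/(t\Delta E)$ on ``surviving outside $\{v\}$'' only controls the probability of \emph{moving} to a non-$\{v\}$ neighbor; it says nothing about the stay-put mass of a low-$W$ walker, which is $1 - \Delta t\big((1-t)d(\xi_i)/d + t\,\overline{W}(\xi_i)\big)$ and is close to $1$ whenever $\overline{W}(\xi_i)$ is small. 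The descent hypothesis $W(u)<W(\xi_i\sim u)$ only asserts that \emph{some} walker neighbors a lower-$W$ site; it does not force every walker outside $\{v\}$ to sit at the maximum-$W$ site, so ``the max-$W$ walker has no stay-put mass'' does not rescue the bound on $\Pr{B}$. Concretely, in the paper's base configuration all walkers start at a single site $u$, so $\Delta E=0$: no walker can die at all in that step, and your factor $1-N(1-t)/(t\Delta E)$ is vacuous.

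The paper's proof is a \emph{two-step} argument, and the separation of steps is the point. In the first step, with all walkers at $u$ one has $\Delta E=0$, so nobody dies and the only thing established is $\Pr{m_{t+1}(v)\neq 0}\ge 1-(1-T_{u\to v})^N\ge 1-e^{-Np}$. This first step \emph{creates} a nonzero $\Delta E$ by placing a walker at $v$. In the second step, every walker still at $u$ is now the max-$\overline{W}$ walker, so with $\Delta t$ saturating \cref{eqn:deltat} each dies with probability $t\Delta E/((1-t)+t\Delta E)$; hence all of them die with probability at least $\big(t\Delta E/((1-t)+t\Delta E)\big)^N\ge 1-N(1-t)/(t\Delta E)$, and the jump process deposits them at $v$. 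The product of the two factors then gives the stated bound (for $m_{t+2}$, not $m_{t+1}$, in the paper's notation). In short, the mechanism you are missing is that the first half-step's job is precisely to manufacture the $\Delta E$ that makes the death-driven second factor work; collapsing the two steps into one loses this.
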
    
    \begin{proof}

        We prove this in the case that all walkers are initialized to the same site $u$ and then sketch the proof for other distributions. Suppose that, initially, all walkers occupy vertex $u$. Then, $\Delta E = 0$. Thus, by \cref{eqn:grad}, no walkers transition to $\infty$. Now, let $T_{u\mapsto v} \leq 1 - T_{u\mapsto v}$ be the probability that walkers transition to site $v$, with $W(v) < W(u)$. Then, the probability that at least one walker jumped to state $v$ is given by
        \[
            \Pr{m_{t+1}(v) \neq 0} \geq 1 - (1 - T_{u\mapsto v})^N.
        \]
        Now, all walkers are distributed between two sites, where we have $T_{\xi_i \mapsto \infty} = \frac{\Delta E}{(1-t)+t \Delta E}$. Thus, 
        \[
            \Pr{m_{t+1+\half}(v) = 0 } \geq \left(\frac{\Delta E}{(1-t)+t \Delta E}\right)^N.
        \]
        Hence, 
        \begin{align*}
            \Pr{m_{t+2}(v) = N} &\geq \left(1 - (1 - T_{u\mapsto v})^N\right)\left(\frac{t \Delta E}{(1-t)+t \Delta E}\right)^N \\
            &\geq (1-e^{-N T_{u\mapsto v}})e^{-N\frac{1-t}{t \Delta E}}\\
            &\geq (1-e^{-N T_{u\mapsto v}})(1-N\frac{1-t}{t \Delta E})
        \end{align*}        
    
    To prove \cref{prop:descent} in full generality, one only needs to consider the behavior of the tail of the walkers occupying the highest site in the distribution. The proof simplifies, because the only behavior that needs to be considered up to $\bigO(\frac{N}{\Delta E})$ is the jump process. 

    \end{proof}
    In other words, despite the walk being biased against making the transition, the jump process still allows walkers to reliably transition to more preferable sites, provided such sites are locally available. This is a good model of, for instance, a walk on the hypercube with cost function equal to Hamming weight. Furthermore, the probability that walkers transition to lower states could be made less dependent on $N$ if we relax the number of walkers expected to make the transition.
    
    By taking $t$ to be sufficiently close to $1$, some stage of the algorithm always achieves gradient descent to whatever polynomial approximation we desire. That is, if $t\sim 1-\mathrm{poly}^{-1}(N)$, then we have that we perform gradient descent with probability $1 - \bigO(\frac{\mathrm{poly}^{-1}(N)}{\Delta E})$ for large enough $N$. Most of the time, however, we do not wish to simulate gradient descent anywhere but at the very end of an interpolation. That is, we wish to delay this behavior until after regions of local optima have been identified. 
    
    The theorem above is about an entirely general space and, thus, we do not require the restriction of SSMC to trees. In particular, we know that every interpolation of $H(t)$ will cross critical points such that behavior like gradient descent occurs. Furthermore, by continuity we know that there exist values of $t$ in the interpolation such that our procedure exhibits relative amounts of descent versus random search, including some value that should approximate an unbiased walk. That SSMC also crosses over regions with weak descent will allow the algorithm to cross barriers (such as those of \cite{Crosson2016SimulatedAnnealing}) with bounded probability, however we leave the analysis of these cases for future work. In particular, after some walkers clear a barrier, other remaining walkers will jump to positions across the barrier through the death process, precisely as occurs on combs. Hence we can see an exponential growth in walkers across barriers, allowing for an efficient continuation of search within difficult-to-locate regions.
   

\section*{Acknowledgements}
We thank Stephen Jordan for useful discussions. \PIRA

\bibliography{main,mendeley}
 
\appendix

\section{Additional material}\label{sec:computations}

\subsection{Exact expression for transitions}

In the case of a node with one child, the dynamics of the continuous time process can be given in closed form. Here the transition rate matrix is given by
\begin{equation*}
    H = \begin{pmatrix} E + w & -w \\ -w & w \end{pmatrix}.
\end{equation*}

One can verify directly the two eigenvalues of this matrix are given by 
\begin{align*}
    \lambda_{\pm} = \lambda_2 + w \pm \Delta_2
\end{align*}    
where
\begin{align*}
    \lambda_2 &= \frac{E + w}{2} \\
    \Delta_2 &= \sqrt{\lambda_2^2 + 2 w^2} \\
    \Delta_2 &= \sqrt{\frac{E^2}{4} + \frac{wE}{2} + \frac{9w^2}{4}}.
\end{align*}
The associated eigenvectors, which we leave unnormalized, are 
\begin{equation*}
    \vec{v}_\pm = \sqrt{2}w\begin{pmatrix} \sqrt{2}w\\ \lambda_\mp - w \end{pmatrix} 
\end{equation*}
In particular,
$$\begin{pmatrix} 0\\1 \end{pmatrix} = \frac{1}{2\Delta_2}(\vec{v}_- - \vec{v}_+),$$
and 
$$\begin{pmatrix} 1\\0 \end{pmatrix} = \frac{1}{\sqrt{8}w}\left( (\vec{v}_+ + \vec{v}_-) - \frac{E+w}{2\Delta_2}(\vec{v}_- - \vec{v}_+)\right).$$

At a time $0 < t \leq 1$,
\begin{align*}
    e^{-Ht}\begin{pmatrix}1\\0\end{pmatrix} &= \frac{1}{\sqrt{8}w}\left[ (e^{-\lambda_+t}\vec{v}_+ + e^{-\lambda_-t}\vec{v}_-) - \frac{E+w}{2\Delta_2}(e^{-\lambda_-t}\vec{v}_- - e^{-\lambda_+t}\vec{v}_+)\right)\\
    &= e^{-(E + 3w)t/2}\begin{pmatrix} \cosh(\Delta_2t) - \frac{E+w}{2\Delta_2}\sinh(\Delta_2t) \\ \frac{w}{\sqrt{2}\Delta_2}\sinh(\Delta_2t)\end{pmatrix}.
\end{align*}
Writing this in coordinates gives
\begin{align*}
    \mathrm{Pr}\{X_{j+t} = x \:|\: X_j = x\} &= e^{-(E + 3w)t/2}\left(\cosh(\Delta_2t) - \frac{E+w}{2\Delta_2}\sinh(\Delta_2t)\right),\\
    \mathrm{Pr}\{X_{j+t} = y \:|\: X_j = x\} &= e^{-(E + 3w)t/2}\frac{w}{\Delta_2}\sinh(\Delta_2t).
\end{align*}
In particular in the case that $E \gg w$ we find
\begin{align*}
    \mathrm{Pr}\{X_{j+t} = x \:|\: X_j = x\} &= e^{-(w + \bigO(1/E))t}\left(\frac{2w^2}{E^2} + \bigO\left(\tfrac{1}{E^3}\right)\right),\\
    \mathrm{Pr}\{X_{j+t} = y_\mu \:|\: X_j = x\} &= e^{-(w + \bigO(1/E))t}\left(\frac{w}{E} - \frac{2w^2}{E^2} + \bigO\left(\tfrac{1}{E^3}\right)\right).
\end{align*}

\subsection{Derivation of the population dynamical equation}

First let express the dynamics of $\{X(t)\}_{t\geq 0}$ as a generator on functions $f:V(G)\cup\{\infty\} \to \mathbb{R}$. The key to linking the two is by taking $f = \mathds{1}_x$ for a given $x\in V(G)\cup\{\infty\}$ as then
$$E[f(X(t))] = \sum_y E[ f(X(t))\:|\:X(t) = y]\: \Pr{X(t) = y} = \sum_y f(y)\psi_y(t) = \psi_x(t)$$
when $f = \mathds{1}_x$. We are consider weighted graph Laplacians 
$$L_{y,x} = \left\{\begin{array}{cl} -w_{y,x} & \text{ when $y\not= x$,}\\ \sum_{y'} w_{y',x} & \text{ when $y=x$.}\end{array}\right.$$
The potential is given by $W_{y,x} = E_x\delta_{y,x}$. The ``death rate'' is then the column excess, $E_x$.

So,
\begin{align*}
    \frac{d}{dt} E[f(X(t))] &= \sum_y f(y)\frac{d\psi_y(t)}{dt} \\
    &= f(\infty)\frac{d\psi_\infty(t)}{dt} + \sum_{y\in V(G)} f(y)\frac{d\psi_y(t)}{dt}\\
    &= f(\infty)\left(\sum_x E_x \psi_x(t)\right) - \sum_{x,y\in V(G)} f(y) (L_{y,x} + W_{y,x}) \psi_x(t)\\
    &= f(\infty)\left(\sum_x E_x \psi_x(t)\right) - \sum_{x,y\in V(G)} f(y) (-w_{y,x} + \sum_{y'} w_{y',x}\delta_{yx} + E_x\delta_{yx}) \psi_x(t)\\
    &= \sum_{x\in V(G)} [(f(\infty) - f(x)) E_x + \sum_{y\in V(G)} w_{y,x}(f(y) - f(x))] \psi_x(t)\\
    &= -E[(\mathcal{H}f)(t)]
\end{align*}
where
$$(\mathcal{H}f)_x(t) = (f(x) - f(\infty)) E_x + \sum_{y\in V(G)} w_{y,x}(f(x) - f(y)).$$

And we can recover the dynamics of $X(t)$ by setting $f = \mathds{1}_y$ in the above:
\begin{align*}
    (\mathcal{H}\mathds{1}_y)_x(t) &= \delta_{xy}E_x + \sum_{z\in V(G)} w_{z,x}(\delta_{yx} - \delta_{xy})\\
    &= \delta_{yx}E_x - w_{y,x} + \delta_{yx}\sum_{x\in V(G)} w_{z,x}.
\end{align*}

Starting the analysis where all the walkers move independently on $V(G)\cup\{0\}$, we have for a function $f:(V(G)\cup\{\infty\})^N \to \mathbb{R}$ the transition rate generator acts as
\begin{align*}
    (\mathcal{H}_Nf)_{\vec{x}}(t) &= \sum_{i=1}^N (\mathcal{H}^{(i)}f)_{\vec{x}}(t)\\
    &= \sum_{i=1}^N [(f(\vec{x}) - (T_{x_i\mapsto\infty}f)(\vec{x}))E_{x_i} + \sum_{y\in V(G)} w_{y,x_i}(f(\vec{x}) - (T_{x_i\mapsto y}f)(\vec{x}))]
\end{align*}
where we write $T_{x_i\mapsto y}$ for the operator that substitutes $y$ for $x_i$ in position $i$. So to obtain the Fleming-Viot process we must replace $(T_{x_i\mapsto\infty}f)(\vec{x})$ the the corresponding term for a move to the site of another randomly selected walker. But this is easy: $\frac{1}{N-1}\sum_{j\not= i} (T_{x_i\mapsto x_j}f)(\vec{x})$. Therefore the generator for the Fleming-Viot process is
$$(\mathcal{H}_Nf)_{\vec{x}}(t) = \sum_{i=1}^N [(f(\vec{x}) - \tfrac{1}{N-1}\sum_{j\not= i} (T_{x_i\mapsto x_j}f)(\vec{x}))E_{x_i} + \sum_{y\in V(G)} w_{y,x_i}(f(\vec{x}) - (T_{x_i\mapsto y}f)(\vec{x}))].$$

To clarify this exposition, we will view our statistic $\eta$ is a different light. Fix an $x\in V(G)$ and take $\eta_x(\vec{x}) = \sum_{i=1}^N \mathds{1}_{x_i = x}$. As $\eta_x$ counts the number of $x_i$ that equal $x$, the function in the main text satisfies $\eta(t;x) = \eta_x(\xi(t))$. Now we compute
\begin{equation*}
    (T_{x_i\mapsto x_j}\eta_x)(\vec{x}) = \left\{\begin{array}{cl} \eta_x(\vec{x}) + 1 & x_i\not= x,\ x_j = x\\ \eta_x(\vec{x}) - 1 & x_i = x,\ x_j\not= x\\ \eta_x(\vec{x}) & \text{otherwise.}\end{array}\right.
\end{equation*}
And so,
\begin{align*}
    \tfrac{1}{N-1} \sum_{j \not= i} (T_{x_i\mapsto x_j}\eta_x)(\vec{x}) 
    &= \frac{1}{N-1}\left\{\begin{array}{cl} (\eta_x(\vec{x}) + 1)\eta_x(\vec{x}) + \eta_x(\vec{x})(N-1-\eta_x(\vec{x})) & x_i\not= x\\ \eta_x(\vec{x})(\eta_x(\vec{x}) - 1) + (\eta_x(\vec{x})-1)(N-\eta_x(\vec{x})) & x_i= x.\end{array}\right.\\
    &= \frac{1}{N-1}\left\{\begin{array}{cl} N\eta_x(\vec{x}) & x_i\not= x\\ (N-1)\eta_x(\vec{x}) - N & x_i= x.\end{array}\right.
\end{align*}
This gives
\begin{align*}
    &\sum_{i=1}^N (f(\vec{x}) - \tfrac{1}{N-1}\sum_{j\not= i} (T_{x_i\mapsto x_j}f)(\vec{x}))E_{x_i}\\
    &\quad = \eta_x(\vec{x})E_x - \tfrac{1}{N-1}\sum_{y\not= x} \eta_x(\vec{x})\eta_y(\vec{x})E_y.
\end{align*}
A similar computation shows
\begin{align*}
    &\sum_{i=1}^N \sum_{y\in V(G)} w_{y,x_i}(f(\vec{x}) - (T_{x_i\mapsto y}f)(\vec{x}))\\
    &\quad = \sum_{y\not= x} w_{y,x}\eta_x(\vec{x}) - w_{x,y}\eta_y(\vec{x}).
\end{align*}
Therefore
$$(\mathcal{H}_N\eta_x)(t) = \sum_{y\not= x} (w_{y,x}\eta_x - w_{x,y}\eta_y - \tfrac{1}{N-1}\eta_x\eta_y E_y) + \eta_xE_x.$$

\end{document}